\newtheorem{lemma}{Lemma}
\newtheorem{proposition}{Proposition}
\newtheorem{remark}{Remark}
\newcommand{\Mod}[1]{ (\mathrm{mod}~#1)}
\newcommand{\Bin}[1]{ {\textrm{BIN}(#1)}}
\newcommand{\BinL}[1]{ {\textrm{BINL}(#1)}}
\newcommand{\BinCut}[2]{ {\textrm{BIN}_{#1}(#2)}}
\newcommand{
\MBinCut}[2]{ {\textrm{MBIN}}_{#1}(#2)}
\newcommand{
\MBin}[1]{ {\textrm{MBIN}}(#1)}
\newcolumntype{P}[1]{>{\centering\arraybackslash}p{#1}}
\newcolumntype{M}[1]{>{\centering\arraybackslash}m{#1}}
\renewcommand*\env@matrix[1][*\c@MaxMatrixCols c]{%
  \hskip -\arraycolsep
  \let\@ifnextchar\new@ifnextchar
  \array{#1}}
\begin{document}

\preprint{APS/123-QED}
\title{High order schemes for solving partial differential equations on a quantum computer}

\author{Boris Arseniev} \affiliation{Skolkovo Institute of Science and
Technology, Moscow, Russian Federation}
\author{Dmitry Guskov}\affiliation{Skolkovo Institute of Science and
Technology, Moscow, Russian Federation}
\author{Richik Sengupta}\affiliation{Artificial Intelligence Research Institute, Moscow, Russia}\affiliation{Skolkovo Institute of Science and
Technology, Moscow, Russian Federation}
\author{Igor Zacharov}\affiliation{Skolkovo Institute of Science and
Technology, Moscow, Russian Federation}

\begin{abstract}
We explore the utilization of  higher-order discretization techniques in optimizing the gate count needed for quantum computer based solutions of partial differential equations. To accomplish this, we present an efficient approach for decomposing $d$-band diagonal matrices into Pauli strings that are grouped into mutually commuting sets.

Using numerical simulations of the one-dimensional wave equation, we show that higher-order methods can reduce the number of qubits necessary for discretization, similar to the classical case, although they do not decrease the number of Trotter steps needed to preserve solution accuracy. This result has important consequences for the practical application of quantum algorithms based on Hamiltonian evolution.
\end{abstract}

\maketitle

\section{Introduction} \label{sec:introduction}

Partial differential equations (PDEs) are essential for modeling a wide array of phenomena in physics, engineering, and various scientific fields. Specifically, within Computational Fluid Dynamics (CFD), widely used Finite Difference (FD) methods apply discretization of spatial and temporal components to derive solutions \cite{FluidDynamics}. In the specific context of modeling wave propagation, studies have demonstrated that utilizing higher-order accuracy in discretization boosts the efficiency of time-step integration, enhances the stability of the solutions \cite{CFDAcoustic, HigherOrder2000}, and effectively reduces numerical dispersion \cite{Liang}. However, this modeling comes with significant computational costs as the number of discretization points $N$ and the problem's dimensionality $D$ increasing as $~O(N^D)$.

Quantum computing offers a promising alternative to tackle these challenges. Since Feynman proposed his conjecture in 1982, extensive research has focused on demonstrating how quantum computers employing universal gates \cite{QCompUniversal} can enable more efficient algorithms for solving PDEs. These efforts particularly target the rapid growth in resource requirements as dimensionality increases \cite{QPDEsPhysRevA2023, Childs2021highprecision, Costa}.

For the wave equation Costa et al. proposed an algorithm  which exhibits improved scalability in three-dimensional space, specifically polynomial when compared to the traditional method using a linear equation solver \cite{Costa}. The algorithm reformulates the wave equation as a Schrodinger time evolution for a quantum state $\psi (t)$, where the established solution is $\psi (t) = e^{-i Ht} \psi (t=0)$. The quantum state time evolution is directed by the system Hamiltonian $H$, which inherently includes the discretization scheme.
The advantage of this approach is that the qubit count $n$ increases linearly with dimension $n D$ and grows logarithmically with respect to discretization as $n=\log_2 N$.

Solving PDEs can be approached by simulating the behavior of a quantum system, a thoroughly explored subject within the context of the quantum circuit model for universal quantum computation 
\cite{Hagan2023compositequantum, berry2007efficient,BerrySimulatingH2015, NielsenChuang}. The critical challenge lies in decomposing the Hamiltonian $H$, representing the system dynamics, into a sum of Pauli operators for implementation in quantum circuits. 

Standard decomposition methods involve matrix multiplications, with computational complexity scaling exponentially, posing significant obstacles for scalability. Recent advancements, such as a branchless algorithm leveraging Gray codes, achieve efficient decomposition of dense matrices into Pauli strings with improved speed and memory efficiency \cite{jones2024decomposing}. Similarly, the PauliComposer algorithm optimizes tensor products of Pauli matrices, accelerating Hamiltonian decomposition \cite{vidal2023paulicomposer}. These works highlight the importance of tailored strategies for efficient decomposition, particularly in dense and complex cases.

In our prior study \cite{tridiagonal}, we put forward a method to decompose tridiagonal Hamiltonians, effectively removing the requirement for matrix multiplication. This represents a notable advance, as it allows to  optimize quantum algorithms for PDEs with variable coefficients. This paper focuses on extending that method to include Hamiltonians derived from higher-order discretization schemes. The main contribution of our work is this decomposition method. Higher-order schemes, much like in classical computation, are essential to enhance the numerical precision of quantum algorithms.

Second aspect of our research is related to the application of product formulas in the context of higher-order discretization. 
The sequence of Pauli operators when applied in a quantum circuit must be ordered in time. The method of approximately ordering the product of Pauli operators is referred to as trotterization \cite{Trotter, Childs_Trotter}. The time evolution is calculated incrementally using small time intervals $\delta t$. Hence, quantum computation requires breaking up the time integration interval in $r$ steps with $r=t/\delta t$, akin to the classical time discretization. Therefore, trotterization results in a significant gate count in quantum algorithms. Our previous research shows that the number of gates to solve the 1D wave equation scales as $\tilde{O}(N^{2})$, where $\tilde{O}$ signifies the principal term \cite{tridiagonal}. Therefore, trotterization represents an obstacle to quantum advantage. 

In this study, we conduct a numerical experiment to examine if implementing higher-order spatial discretization can reduce the required number of trotterization steps.

This paper starts with review of notation and methods for approximating high-order derivatives.
Section \ref{sec:decomposition} centers on decomposition of the $d$-band matrix. 
This result is used  to build a circuit for simulating Hamiltonians, as illustrated by solutions to the wave equation in Section \ref{sec:application}. The concluding remarks are presented in Section \ref{sec:conclusions}.

\section{Notations and Definitions}
\label{sec:notation}

This paper employs the same notation as the previous study, with key notations detailed in Table \ref{tab:definitions_bit_strings} along with some enhancements. Refer to Chapter II in \cite{tridiagonal} for more details.

We adhere to the following conventions: 
\begin{enumerate}
    \item 
Exclude the tensor product symbol in Pauli strings $P$, e.g., $XYZY$ stands for $X \otimes Y \otimes Z \otimes Y$.
    
    \item 
The notation $\{P_1,\;P_2\}^{\otimes{n}}$ is the $n$-fold Cartesian product of the set $\{P_1,\;P_2\}$, with its elements interpreted as Pauli strings. For example, $\{P_1,\;P_2\}^{\otimes{2}}$ is equivalent to $\{P_1,\;P_2\} \otimes \{P_1,\;P_2\}$, which is the same as $\{P_1 \otimes P_1,\;P_1 \otimes P_2,\;P_2 \otimes P_1,\;P_2 \otimes P_2\}$. In a similar manner, we represent the product $P_k \otimes P_j$ in an abbreviated form as $P_k P_j$.
    \item 
The integer representation of the bit sequence $\mathbf{x}~=~(x_1, \dots, x_n)$ is expressed as $x=\sum_{j=1}^{n}x_j2^{n-j}$. That is the most significant bit (MSB) is on the left. This mapping is represented with function $\MBinCut{n}{x}$, defined in Table \ref{tab:definitions_bit_strings}. For example, $\MBinCut{8}{18}=00010010$ for $n=8$ and $x=18$.

This aligns with \cite{Welsh} for the wave function $\textbf{k}=\ket{k_1, \dots, k_n}$ in the computational basis, which corresponds to the representation by bit strings $\ket{\mathbf{k}}$. Although the numerical value $k$ is derived from the bit strings in MSB order, operators applied to individual qubit subspaces must follow a left-to-right sequence of bits in $k$, as highlighted in \cite{Welsh}.

The sequence of bits denoted by $\Bin{}$ is an arrangement where bits are listed from left to right, that is in least significant bit (LSB) notation. The proofs of the propositions discussed in the document are detailed in Appendix \ref{sec:appendix_proof} and make use of this LSB representation. According to this notation, $\BinCut{8}{18}=01001000$, and the relevant operators are listed in Table \ref{tab:definitions_bit_strings}.
    \item 
The function $\BinL{}$ returns the length (number of bits) of binary representation, for example $\BinL{13} = 4$.
    \item 
The notation $\norm{\cdot}$ represents the $l^2$-norm for vectors and the spectral norm for matrices, defined by the largest singular value. Note that for unitary matrices $U$, it holds that $\norm{U} = 1$.
\end{enumerate}

\begin{table}[h]
\renewcommand{\arraystretch}{1.7} 
    \centering
    \begin{tabular}{c|l}
    Notation & Definition \\
    \hline
    	$ \mathcal {P}=\{ I,X,Y,Z\} $ & Set of Pauli matrices \\
    \hline
    	$\oplus$ & XOR (addition modulo 2) \\
    \hline
    	$\{P_1,\;P_2\}^{\otimes{n}}$ & $n$-th Cartesian product \\
    \hline
	\makecell{$ x^p$ \\ where $x, p \in \mathbb{B}$ } & $x^p = x \oplus \overline{p} = x \oplus p \oplus 1$ \\
    \hline
	$ \overline{\mathbf{x}} $ & Negation, $\overline{\mathbf{x}} = (\overline{x}_1,\dots,\overline{x}_n)$ \\
    \hline
    $\mathbf{x}^\mathbf{y}$ & Exponentiation, $\mathbf{x}^\mathbf{y}=(x_1^{y_1},\dots,x_n^{y_n})$ \\
    \hline
    $\mathbf{x} \cdot \mathbf{y}$ & Inner product, $\mathbf{x} \cdot\mathbf{y} = \sum_{l=1}^n x_ly_l$ \\
    \hline
    $ Z^\mathbf{y} $ & $Z^\mathbf{y} \equiv \bigotimes_{l=1}^n Z^{y_l} = Z^{y_1}\otimes\cdots\otimes Z^{y_n}$ \\
    \hline
    $\mathbf{x} \ast \mathbf{y}$ & \makecell{Concatenation of binary strings \\ $x_1 \dots x_n \ast y_1 \dots y_m = x_1 \dots x_n y_1 \dots y_m$} \\
    \hline
    	$\textrm{BIN}$ & $\textrm{BIN}:\mathbb{N}\cup\{0\} \rightarrow \mathbb{B}^{n}$\\
    \hline
        $\BinL{x}$ & \makecell{ Length of a binary representation \\ of a number $x$.} \\
    \hline
    $\BinCut{a}{b}$ & \makecell{Binary representation of a number $b$ \\ of fixed length $a$, $a \geq \BinL{b}$ \\
    padded on the right with zeros if necessary}    \\
    \hline
    $\MBinCut{a}{b}$ & \makecell{Binary representation of a number $b$ \\ of fixed length $a$, $a \geq \BinL{b}$ \\
    padded on the left with zeros if necessary}   \\
    \end{tabular}
    \caption{Notational convention. Here $\mathbf{x} = (x_1,\dots,x_n) \in \mathbb{B}^n$, $\mathbf{y} = (y_1,\dots,y_n) \in\mathbb{B}^n$, and $X$ and $Z$ are Pauli matrices.}
    \label{tab:definitions_bit_strings}
\end{table}

\subsection{High-order central finite difference schemes}
\label{sec:High-order-schemes}

Following \cite{LeVeque2007} we consider derivative approximation of $f(x): \mathbb{R} \rightarrow \mathbb{R}$ a smooth function of one variable over an interval containing a particular point of interest $x$. 

We introduce the central difference operator as
\begin{equation}
    \hat{B}_{k} f(x) \equiv \sum_{j=-k}^{k} b_j f(x+jh),
    x\in \mathbb{R},
    \label{eq:diff-operator}
\end{equation}
where $h \in \mathbb{R}$ is some small value. We can choose coefficients $b_j \in \mathbb{R}$ in such way that $b_{j} = -b_{-j}$ for $j = 1, \dots, k$, with $b_{0} = 0$, and the operator $\hat{B}_k$ will approximate a derivative with an accuracy characterized by $O(h^{\kappa})$, where the accuracy order $\kappa = 2k$, that is
\begin{equation}
    |\hat{B}_k f(x) - f'(x)| = O(h^{2k}).
    \label{eq:approximation-order}
\end{equation}

The central difference method, achieving an accuracy of $\kappa = 2k$, is characterized by the row $(-b_{k}, \dots, -b_1, 0, b_1, \dots, b_{k})$. This leads to a practical expression for obtaining a $k$-band matrix structure:
\begin{equation*}
     \begin{split}
     \left. \frac{df(x)}{dx} \right |_{x=0} & = (-b_{k}, \dots, -b_1, 0, b_1, \dots, b_{k})
     \begin{pmatrix}
         f_{-k h} \\
         \vdots \\
         f_{-h} \\
         f_{0} \\
         f_{h} \\
         \vdots \\
         f_{k h} \\
     \end{pmatrix} \\
     & = \sum_{j = 1}^{k} b_j (f_{jh} - f_{-jh}),
     \end{split}
\end{equation*}
where $f_{jh} = f(jh)$, with the grid being uniform with a constant spacing $h \in \mathbb{R}$ (regular grid).

Using the program from \cite{Fornberg1998}, we deduce coefficients to approximate the first derivative on both regular and irregular grids with the desired accuracy.
Table \ref{tab:Approx-M1} contains coefficients for the approximations of the central difference operator (normalized with $h=1$) up to $\kappa = 2k=10$.

\begin{table}[h]
    \centering
    \def\arraystretch{1.5}
    \begin{tabular}{c|l}
         k & $(b_{-k}, \dots, b_{-1}, b_{0}, b_{1}, \dots, b_{k}) = (-b_{k}, \dots, -b_{1}, 0, b_{1}, \dots, b_{k})$ \\
         \hline
         1 & $\frac{1}{2}(-1,0,1)$ \\
         \hline
         2 & $\frac{1}{12}(1,-8,0,8,-1)$ \\
         \hline
         3 & $\frac{1}{60}(-1,9,-45,0,45,-9,1)$\\
         \hline
         4 & $\frac{1}{840}(3,-32,168,-672,0,672,-168,32,-3)$ \\
         \hline
         5 & $\frac{1}{2520}(-2,25,-150,600,-2100,0,2100,-600,150,-25,2)$
    \end{tabular}
    \caption{Coefficients for central approximation of the first derivative with an accuracy of $O(h^{2k})$ on a regular grid.}
    \label{tab:Approx-M1}
\end{table}

To compute an approximation to a derivative of the function $f(x)$ over an interval $(0,l)$, the operator $\hat{B}_k$ is applied to the set of points $f_j = f(x_j), j=1, \dots, N$ obtained after discretization of the function $f(x)$ over $N$ points. Boundary conditions are addressed in \ref{subsec:boundary_cond}; for now, assume $f(x) = 0$ when $x \notin (0,l)$. This procedure can be represented in the matrix form with a matrix $B_k$ of dimensions $N \times N$, acting on an $N$-dimensional vector $\vec{f}~\equiv~(f_1, \dots, f_N)^T$. The accuracy of this method is characterized by the discretization error, denoted as \(\epsilon_{\text{ds}}(k)\), and defined as  
\begin{equation}
\epsilon_{\text{ds}}(k) = \norm{\frac{d}{dx} \vec{f} - B_k \vec{f}},
\label{eq:ds_error}
\end{equation}
where \(\frac{d}{dx} \vec{f} = \left(f'(x_0), \dots, f'(x_N)\right)^T\).

Moreover, for each point, the accuracy is given by equation \eqref{eq:approximation-order}, taking into account that $h = \frac{l}{N-1}$ we have the following scaling for discretization error
\begin{equation}
    \epsilon_{\text{ds}}(k) = O \left( \sqrt{N} h^{2k} \right) \approx O \left( h^{2k-1/2} \right).
    \label{eq:scaling_ds}
\end{equation}

Notably, $B_k$ is a $k$-band diagonal matrix. A $k$-band matrix is characterized by having a lower and upper bandwidth of $k$, meaning there are $k$ non-zero diagonals both above and below the main diagonal, with all other elements being zero \cite{golub2013matrix}. For example $1$-band diagonal matrix is a tridiagonal matrix
\begin{equation*}
    B_1=
    \begin{pmatrix}
    b_1 & c_1 &  0 &  \dots & 0 &  0 & 0 \\
    a_1 & b_2 &  c_2 & \dots & 0 & 0 & 0   \\
    0 & a_2 & b_3 & \dots & 0 & 0 & 0   \\
    \vdots & \vdots & \vdots & \ddots & \vdots & \vdots & \vdots \\
    0 & 0 & 0 & \dots & b_{N-2} & c_{N-2} & 0  \\
    0 & 0 & 0 & \dots & a_{N-2} & b_{N-1} & c_{N-1} \\
    0 & 0 & 0 & \dots & 0 & a_{N-1} & b_{N} &  \\
    \end{pmatrix}.
\end{equation*}

The matrix representation of the central difference operator $\hat{B}_1$ (see eq. \eqref{eq:diff-operator} with $k=1$) is the matrix $B_1$ above with values $a_k = -1, b_k = 0, c_k = 1$ (see table \ref{tab:Approx-M1}) and assuming $f(x) = 0$, for $x \notin (0,l)$.

Repeated applications of $\hat{B}_k$ result in derivatives of higher-order. Typically, differential equations of second order or more can be converted into a series of first-order differential equations, making them appropriate for creating quantum algorithms focused on system dynamics. 

In particular, the wave equation discussed in Section \ref{subsec:quant_algor} is transformed into a system described by the Schrodinger equation.
The Hamiltonian for this system, up to a constant, is depicted by a symmetrized matrix:
\begin{equation}
    H =   
    \begin{pmatrix}    0 & B \\    B^\dagger & 0\\     \end{pmatrix},
    \label{eq:BBbar-example}
 \end{equation}
and the Laplacian becomes $L = B^2 = -B B^\dagger$ in this case.

\section{Multiband matrix decomposition}
\label{sec:decomposition}

The process of decomposing the multiband matrix $B_k$ into Pauli strings comprises two tasks: determining the Pauli strings that are part of the decomposition and computing their corresponding coefficients. Although theoretically one might consider the entire set of \(4^n\) Pauli strings, practically many of these do not play a role in the expansion. Hence, it is beneficial to minimize this set by eliminating strings that have zero coefficients. The following subsections focus on the reduction procedure and the representation of the resulting structure.

\subsection{Sets of Pauli strings in decomposition}

An arbitrary Pauli string can be defined as the image of the extended Pauli string operator (Walsh function) $\hat{W}:\mathbb{B}^n\times \mathbb{B}^n\rightarrow \mathcal{P}_n$ as follows:
\begin{equation}
\hat{W}(\mathbf{x},\mathbf{z}) = \imath^{\mathbf{x}\cdot\mathbf{z}} X^{\mathbf{x}}Z^{\mathbf{z}} = \bigotimes_{j=1}^{n}\imath^{x_jz_j}X^{x_j}Z^{z_j}.
\label{eq:walsh_oper}
\end{equation}
This makes it possible to encode any Pauli string $P$ with a unique pair $(\mathbf{x},\mathbf{z})$. Decomposition into Pauli basis for matrix $B \in \mathbb{C}^{2^n \cross 2^n}$ may be written as
\begin{equation}\label{eq: decomposition_app}
    B = \frac{1}{2^n}\sum_{\mathbf{x},\mathbf{z}\in\mathbb{B}^n}\beta_{\mathbf{x},\mathbf{z}}\hat{W}(\mathbf{x},\mathbf{z}).
\end{equation}

As shown earlier in Proposition 4 of \cite{tridiagonal}, the coefficients can be expressed in the form
\begin{equation}
    \beta_{\mathbf{x},\mathbf{z}} = \sum_{\mathbf{p}\in\mathbb{B}} \imath^{\mathbf{x}\cdot\mathbf{z}}(-1)^{\mathbf{z}\cdot\mathbf{p}}\cdot b_{\mathbf{p},\overline{\mathbf{p}}^\mathbf{x}},
    \label{eq:calc_weight}
\end{equation}
where $b_{\mathbf{p},\overline{\mathbf{p}}^\mathbf{x}}$ are elements of $B$.

For any $k$-diagonal in a matrix, index $p$ spans all values $ p \in \{0, \dots, 2^{n}-1-k\}$, which determine the value of $\mathbf{x}$ in the solution of the following equation:
    \begin{equation}\label{eq:main_band}
       \mathbf{p}+\mathbf{k} = \mathbf{p}\oplus \mathbf{x} \;\; 
    \end{equation}

The index $\mathbf{z}$ from the Walsh function \eqref{eq:walsh_oper} runs over $0, \dots, 2^{n}-1$ to count all expansion terms belonging to the same $\mathbf{x}$.

The equations \eqref{eq:walsh_oper} to \eqref{eq:main_band} are used for the unique identification of all Pauli strings $P=\hat{W}(\mathbf{x},\mathbf{z})$ that play a role in the matrix decomposition \cite{tridiagonal}.

We now build upon this concept by presenting explicit solutions to equations \eqref{eq:main_band} for all $k = 1, \dots, d$ in the subsequent proposition, which provides the necessary condition for the non-trivial decomposition of any arbitrary $d$-band matrix into Pauli strings:

\begin{restatable}[Decomposition of a $d$-band matrix]{proposition}{decompositionSets}\label{theor:sets} 
The only Pauli strings that can have non-zero coefficients in the decomposition of a $d$-band matrix $B \in \mathbb{C}^{N \times N}$, where $N = 2^n$, are given by $W(\mathbf{x}_{k,j}, \mathbf{z})$,  with $\mathbf{z} \in \mathbb{B}^n$ and  
\begin{equation}
    \mathbf{x}_{k,j} = \MBinCut{n-s}{2^j-1} \ast \MBinCut{s}{2^s-k} \, ,
    \label{eq:unique_solutions}
\end{equation}
where $k \in \{1, \dots, d\}$, $j \in \{1, \dots, n-s\}$, $s = \lceil \log_2(k) \rceil$, and $d \leq 2^{n-1}$.

Additionally, the main diagonal is encoded by strings  $W(\mathbf{x}_{0}, \mathbf{z})$, where
\begin{equation}
    \mathbf{x}_0 = \MBinCut{n}{0}.
    \label{eq:unique_solutions_0}
\end{equation}
\end{restatable}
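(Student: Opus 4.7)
The plan is to collapse the combinatorial bit-string formula for $\mathbf{x}_{k,j}$ to a single arithmetic identity and then finish with a short carry argument. The first step I would take is to interpret $\mathbf{x}_{k,j} = \MBinCut{n-s}{2^j-1}\ast\MBinCut{s}{2^s-k}$ as an $n$-bit integer in the MSB convention: the concatenation evaluates to $(2^j-1)\cdot 2^s + (2^s-k) = 2^{s+j}-k$. Combined with the relation $\mathbf{p}+\mathbf{k}=\mathbf{p}\oplus\mathbf{x}$ from \eqref{eq:main_band}, this reduces the proposition to the arithmetic statement: for every $k\in\{1,\dots,d\}$ and every $p\in\{0,\dots,2^n-1-k\}$, the integer $\mathbf{x}:=(p+k)\oplus p$ admits a representation $\mathbf{x} = 2^m - k'$ with $k'\in\{1,\dots,k\}$ and $\lceil\log_2 k'\rceil < m \le n$. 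The $k=0$ diagonal is immediate since $(p+0)\oplus p = 0 = \MBinCut{n}{0}$.

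Given $\mathbf{x}\ne 0$ produced by some valid $(p,k)$, let $T$ be the index of its leading $1$-bit, so $2^T \le \mathbf{x} < 2^{T+1}$. I would set $m := T+1$ and $k' := 2^{T+1}-\mathbf{x}$; the bounds $1\le k'\le 2^T$, $\lceil\log_2 k'\rceil \le T$, and $T+1\le n$ then hold automatically. The only non-trivial point is $k'\le k$, which is equivalent to the lower bound $\mathbf{x}\bmod 2^T \ge 2^T - k$.

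This is the crux, and I would prove it by a case split on the carry $c_{T-1}$ entering bit $T$ during the addition $p+k$. In the carry-in case $c_{T-1}=1$, set $p':=p\bmod 2^T$ and $q':=(p+k)\bmod 2^T = p'+(k\bmod 2^T)-2^T$; then $\mathbf{x}\bmod 2^T = p'\oplus q'$, and the elementary inequality $a\oplus b = (a\vee b)-(a\wedge b) \ge a-b$ (valid as integers, since bitwise $a\vee b\ge a$ and $a\wedge b\le b$) gives $p'\oplus q' \ge p'-q' = 2^T-(k\bmod 2^T) \ge 2^T - k$. In the no-carry case $c_{T-1}=0$, the identity $\mathbf{x}_T = k_T\oplus c_{T-1}=1$ forces $k_T=1$, hence $k\ge 2^T$ and the right-hand side of the target bound is non-positive, making the inequality trivial.

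The main obstacle is really the opening reformulation $\mathbf{x}_{k,j}=2^{s+j}-k$. Without it one is stuck manipulating MSB-padded concatenations directly, and the bookkeeping is complicated by the fact that the map $(k,p)\mapsto\mathbf{x}$ is highly non-injective: the same string $\mathbf{x}$ can be produced by several diagonals, and when it arises from diagonal $k$ it is often labeled in the proposition by a strictly smaller $k'<k$ (for example $p=0$, $k=3$, $n=4$ yields $\mathbf{x}=3=\mathbf{x}_{1,2}$, not a string of the form $\mathbf{x}_{3,j}$). Once the integer identity is in hand, the only non-routine ingredient left is the XOR-difference inequality $a\oplus b\ge|a-b|$, and the carry case split is a few lines.
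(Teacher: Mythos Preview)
Your proof is correct and takes a genuinely different route from the paper's. The paper switches to LSB notation, writes out an explicit carry recurrence $x_{k+1}\oplus d_{k+1}=(d_k\oplus x_k)(p_k\oplus d_k)\oplus p_k d_k$, and then builds up the answer through three structural lemmas: first that every solution has the shape $S_l\ast P^m_{n-l}$, then that the \emph{novel} solutions at binary length $l$ have $m\ge 1$ and a forced $0$ in position $l$, and finally a counting/construction argument identifying $S_{l-1}$ with $\BinCut{l-1}{2^l-d}$ by exhibiting specific witnesses $p=2^jD-d$. Your argument bypasses all of this by collapsing the concatenation to the single integer identity $\mathbf{x}_{k,j}=2^{s+j}-k$ and then proving directly, via the XOR--difference bound $a\oplus b\ge |a-b|$ and a one-line carry dichotomy, that $(p+k)\oplus p$ always lies at distance at most $k$ below the next power of two above it.

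What each approach buys: yours is shorter, stays in MSB/integer arithmetic throughout, and makes the non-injectivity of $(k,p)\mapsto\mathbf{x}$ (your $p=0$, $k=3$ example) a feature rather than an obstacle. The paper's lemma chain, while longer, isolates exactly which $\mathbf{x}$ appear \emph{for the first time} at each diagonal $k$ and exhibits explicit $p$'s realizing every listed $\mathbf{x}_{k,j}$; that finer stratification is not needed for the proposition as stated (which is only a necessary condition), but it feeds into the set-counting in Proposition~\ref{theor:number_sets} and the commuting-subset structure later in the paper.
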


All Pauli strings with non-zero coefficients in the decomposition are part of the sets labeled by $\mathbf{x}_{k,j}$ and $\mathbf{x}_0$, as described below:
\begin{equation*}
\begin{split}
 S_{k,j} &= \{ W(\mathbf{x}_{k,j},\mathbf{z}) \ | \ \mathbf{z}\in \mathbb{B}^n\},\\
 S_{0} &= \{ W(\mathbf{x}_{0},\mathbf{z}) \ | \ \mathbf{z}\in \mathbb{B}^n\}.
 \end{split}
\end{equation*}

An important use case is a \textit{symmetrized} Hermitian matrix $H$ composed of matrix $B \in \mathbb{C}^{N \times N}$ with $N=2^n$ as specified in \eqref{eq:BBbar-example}.
In corollary to proposition \ref{theor:sets} we offer the following statement:
\begin{restatable}[Decomposition of a symmetrized matrix]{corollary}{decompositionSets_symm}\label{theor:sets-BBbar} 
In the matrix decomposition of $H \in \mathbb{C}^{2N \times 2N}$ described in \eqref{eq:BBbar-example}, where $B \in \mathbb{C}^{N \times N}$ is a $d$-band matrix with $N~=~2^n$, the sets $S_{k,j}$ and $S_{0}$ consists of bit strings  which are a 1-bit extension to \eqref{eq:unique_solutions} and \eqref{eq:unique_solutions_0}:
\begin{equation}
    \mathbf{x}_{k,j} =   1\ast \MBinCut{n-s}{2^j-1}  \ast \MBinCut{s}{2^s-k},
    \label{eq:unique_solutions_BBbar}
\end{equation}
\begin{equation} \label{eq:unique_solutions_BBbar_0}
    \mathbf{x}_0 = 1\ast \MBinCut{n}{0}.
\end{equation}

\end{restatable}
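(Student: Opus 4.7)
The plan is to reduce the claim directly to Proposition~\ref{theor:sets} by isolating the block-antidiagonal structure of $H$. Since $H$ has dimension $2N\times 2N = 2^{n+1}\times 2^{n+1}$, the Pauli strings in its decomposition live on $n+1$ qubits, and I would split every relevant bit string as $\mathbf{p}=p_0\ast\tilde{\mathbf{p}}$, $\mathbf{x}=x_0\ast\tilde{\mathbf{x}}$, $\mathbf{z}=z_0\ast\tilde{\mathbf{z}}$, where the leading bit selects one of the two blocks and the tail of length $n$ indexes a position inside it.

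The first step is to observe that all nonzero entries of $H$ sit in the two off-diagonal blocks, so $H_{\mathbf{p},\mathbf{p}\oplus\mathbf{x}}\neq 0$ is possible only when $p_0\neq p_0\oplus x_0$, i.e., when $x_0=1$. This single remark eliminates the entire $x_0=0$ half of the parameter space and immediately supplies the leading ``$1\ast$'' prefix in \eqref{eq:unique_solutions_BBbar} and \eqref{eq:unique_solutions_BBbar_0} with no further computation.

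With $x_0=1$ fixed, I would substitute into \eqref{eq:calc_weight} and partition the sum over $\mathbf{p}$ according to $p_0\in\{0,1\}$. The $p_0=0$ summand picks up entries $B_{\tilde{\mathbf{p}},\tilde{\mathbf{p}}\oplus\tilde{\mathbf{x}}}$ and reproduces, up to an overall phase depending only on $z_0$ and $\tilde{\mathbf{x}}\cdot\tilde{\mathbf{z}}$, the $n$-qubit Walsh coefficient of $B$ at $(\tilde{\mathbf{x}},\tilde{\mathbf{z}})$; the $p_0=1$ summand does the same for $B^\dagger$, with an additional $(-1)^{z_0}$ produced by the term $z_0 p_0$. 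Since $B^\dagger$ is also $d$-band, Proposition~\ref{theor:sets} applies to each piece independently and forces both to vanish unless $\tilde{\mathbf{x}}$ equals $\MBinCut{n}{0}$ or one of the strings $\MBinCut{n-s}{2^j-1}\ast\MBinCut{s}{2^s-k}$ listed in \eqref{eq:unique_solutions}. Concatenating $x_0=1$ with these admissible tails yields exactly \eqref{eq:unique_solutions_BBbar_0} and \eqref{eq:unique_solutions_BBbar}.

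The main obstacle I anticipate is simply the bookkeeping of the $\imath^{z_0}$ and $(-1)^{z_0}$ phases produced by the MSB split, together with the careful accounting of which $p_0$ case maps to $B$ versus $B^\dagger$. There is no risk of spurious cancellation between the two sub-sums for an inadmissible $\tilde{\mathbf{x}}$, since Proposition~\ref{theor:sets} makes each sub-sum vanish on its own; hence the necessary condition propagates cleanly from $B$ and $B^\dagger$ to $\beta_{\mathbf{x},\mathbf{z}}$, closing the argument.
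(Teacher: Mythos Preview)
Your approach is correct and is precisely the natural reduction one expects: the block-antidiagonal structure forces $x_0=1$, after which the coefficient $\beta^{H}_{\mathbf{x},\mathbf{z}}$ splits into $n$-qubit Walsh coefficients of $B$ and $B^{\dagger}$, each of which is controlled by Proposition~\ref{theor:sets}. The paper does not spell out a separate proof of this corollary; it is simply stated as an immediate consequence of Proposition~\ref{theor:sets}, and your argument is exactly the unpacking of that implication.
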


The unique pair $(\mathbf{x}_{k,j},\mathbf{z})$ clearly specifies each Pauli string within the decomposition of $B$, as described by equation \eqref{eq:walsh_oper}, in a one-to-one correspondence.  
Take note of the following points:

\begin{itemize}
    \item 
    The bit strings obtained in Proposition \ref{theor:sets} (formulas \eqref{eq:unique_solutions} and \eqref{eq:unique_solutions_0}) can be used to derive the Pauli strings, which can have non-zero coefficients in the decomposition using \eqref{eq:walsh_oper}.
    \item 
    The decomposition characterization via Proposition \ref{theor:sets}, using the $(\mathbf{x}_{k,j},\mathbf{z})$ pair, allows us to obtain the appropriate Pauli gates aiding in quantum circuit design (see Section \ref{sec:best_praxis}).
    \item 
In Proposition \ref{theor:sets}, the sets $S_{k,j}$ for $d = 2^{n-1}$ encompass all $4^n$ Pauli strings. Consequently, to decompose a complete matrix, one can calculate all $(\mathbf{x},\mathbf{z})$ pairs using formula \eqref{eq:unique_solutions} with $d = 2^{n-1}$ and determine the corresponding weights.
    \item 
    This decomposition technique results in subsets of Pauli operators labeled by $\mathbf{x}_{k,j}$, which possess advantageous commuting characteristics as explored in Corollary \ref{corr:commute}.
\end{itemize}  

Within each set labeled by $\mathbf{x}_{k,j}$ and $\mathbf{x}_{0}$ the elements are generated by $\mathbf{z} \in \mathbf{B}^n$, which varies from $0,\ldots, 2^{n}-1$, therefore set cardinality is $2^{n}$, same as in previous study \cite{tridiagonal}.
  
The number of sets is determined by the following proposition:

\begin{restatable}[Number of sets in the decomposition]{proposition}{decompositionNumSets}\label{theor:number_sets} 

The total count of sets $S_{k,j}$ (including $S_{0}$) in the decomposition of a $d$-band matrix $B \in \mathbb{C}^{N \times N}$, where $N = 2^n$, as described in Proposition \ref{theor:sets} is given by 
\begin{equation}\label{eq:number_sets}
    s(d,n) = 2^{\BinL{d}}+(n-\BinL{d})d ,
\end{equation}
where $\BinL{d}$ is the binary length of $d$.
\end{restatable}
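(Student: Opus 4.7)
The plan is to count directly from the parametrization provided by Proposition~\ref{theor:sets}. The set $S_0$ contributes $1$, and the remaining sets $S_{k,j}$ are indexed by $k \in \{1,\dots,d\}$ together with $j \in \{1,\dots,n-s\}$, where $s = \lceil \log_2 k \rceil$. These labels produce distinct bit strings $\mathbf{x}_{k,j}$ (as already encoded in Proposition~\ref{theor:sets}), so distinct $(k,j)$ give distinct sets. Summing the number of admissible values of $j$ over $k$ yields
\begin{equation*}
s(d,n) = 1 + \sum_{k=1}^{d}\bigl(n - \lceil \log_2 k \rceil\bigr) = 1 + nd - \sum_{k=1}^{d}\lceil \log_2 k \rceil,
\end{equation*}
so the whole proposition reduces to evaluating the log-ceiling sum in closed form.

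To handle $\Sigma := \sum_{k=1}^{d}\lceil \log_2 k \rceil$, I would group indices by the value $s_k := \lceil \log_2 k \rceil$. One has $s_1 = 0$, and for each $t \geq 1$ the integers with $s_k = t$ are exactly $k = 2^{t-1}+1,\dots,2^t$, a block of size $2^{t-1}$. Writing $T := \BinL{d}$, so that $2^{T-1} \le d \le 2^{T}-1$, every block with $t \le T-1$ lies entirely inside $[1,d]$, while the top block (at $t=T$) contributes only $d - 2^{T-1}$ indices, which is zero precisely when $d = 2^{T-1}$. Hence
\begin{equation*}
\Sigma = \sum_{t=1}^{T-1} t\,2^{t-1} + T\bigl(d - 2^{T-1}\bigr).
\end{equation*}

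The remainder is routine arithmetic. Using the standard identity $\sum_{t=1}^{M} t\,2^{t-1} = (M-1)\,2^{M} + 1$ (an easy induction on $M$) with $M = T-1$ collapses $\Sigma$ to $Td - 2^{T} + 1$. Substituting back gives $s(d,n) = 1 + nd - (Td - 2^{T} + 1) = (n-T)\,d + 2^{T}$, which is exactly \eqref{eq:number_sets} once $T$ is rewritten as $\BinL{d}$. I do not anticipate any substantive obstacle; the only care needed is the edge case $d = 2^{T-1}$, where the top block is empty, but this is absorbed automatically by the factor $d - 2^{T-1} = 0$, while $\BinL{2^{T-1}} = T$ keeps the final expression consistent. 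Small sanity checks (for instance $d=1$ gives $n+1 = 2 + (n-1)$, and $d=4$ gives $4n - 4 = 8 + 4(n-3)$) further confirm the counting.
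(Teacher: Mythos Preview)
Your proof is correct and follows essentially the same approach as the paper: both start from $s(d,n) = 1 + \sum_{k=1}^{d}\bigl(n - \lceil \log_2 k \rceil\bigr)$, group the ceiling values into blocks of equal value, and evaluate the resulting arithmetic series $\sum t\,2^{t-1}$. The only cosmetic difference is that you work with $T = \BinL{d}$ from the outset, which lets you avoid the paper's separate bookkeeping of $\lfloor \log_2 d \rfloor$ versus $\lceil \log_2 d \rceil$ and its final case split on whether $d$ is a power of two; your formulation absorbs that case automatically via the vanishing factor $d - 2^{T-1}$.
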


Our decomposition method, as described in Proposition \ref{theor:sets}, provides the benefit of organizing all commuting operators within each subset. This result, derived from our earlier research \cite{tridiagonal}, relies on the parity of $Y$ operators and is demonstrated in the ensuing corollary.

\begin{restatable}[Commuting subsets]{corollary}{CorollaryCommute}\label{corr:commute}
The Pauli strings that appear in the decomposition of a $d$-band matrix $B \in \mathbb{C}^{N \times N}$, $N = 2^n$, can be grouped into $2 (s(d,n) - 1)$ internally commuting subsets
 distinguished by 
 \begin{equation} 
\mathbf{x} \cdot \mathbf{z} = 0 \Mod{2} \;\;\; \text{or} \;\;\; \mathbf{x} \cdot \mathbf{z} = 1 \Mod{2}.
\label{eq;Yparity}
\end{equation} 
\end{restatable}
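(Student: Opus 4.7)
The claim reduces to a qubit-by-qubit commutator analysis within each set $S_{k,j}$. I would organize the proof in three steps.

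First, fix $\mathbf{x}=\mathbf{x}_{k,j}$ and take two strings $W(\mathbf{x},\mathbf{z}_1),W(\mathbf{x},\mathbf{z}_2)\in S_{k,j}$. Reading off \eqref{eq:walsh_oper} qubitwise, the single-qubit factor at position $l$ is $I,Z,X,Y$ according to $(x_l,z_l)\in\{(0,0),(0,1),(1,0),(1,1)\}$. Since both strings share the same $\mathbf{x}$, at positions with $x_l=0$ both factors lie in $\{I,Z\}$ and commute, while at positions with $x_l=1$ both factors lie in $\{X,Y\}$ and they anticommute iff $z_l^{(1)}\neq z_l^{(2)}$. Two Pauli strings commute iff they anticommute in an even number of positions, so the commutation condition becomes
\begin{equation*}
\mathbf{x}\cdot(\mathbf{z}_1\oplus\mathbf{z}_2)\equiv 0\pmod 2,
\end{equation*}
which is equivalent to $\mathbf{x}\cdot\mathbf{z}_1\equiv\mathbf{x}\cdot\mathbf{z}_2\pmod 2$.

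Second, I observe that $\mathbf{x}\cdot\mathbf{z}$ is exactly the number of positions with $x_l=z_l=1$, i.e.\ the number of $Y$ factors in $W(\mathbf{x},\mathbf{z})$, reduced modulo $2$. This is precisely the Y-parity criterion already exploited in \cite{tridiagonal}. Hence for each $\mathbf{x}_{k,j}\neq\mathbf{0}$ the set $S_{k,j}$ splits into exactly two internally commuting subclasses indexed by the parity of $\mathbf{x}_{k,j}\cdot\mathbf{z}$; by the first step, any two elements inside one subclass commute, and no larger grouping preserves this property.

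Third, I tally. Proposition \ref{theor:number_sets} yields $s(d,n)$ sets in total. The main-diagonal set $S_0$ has $\mathbf{x}_0=\mathbf{0}$, so every string in $S_0$ is a tensor product of $I$ and $Z$ and is trivially commuting (only the even-parity class is populated). Each of the remaining $s(d,n)-1$ sets $S_{k,j}$ with $\mathbf{x}_{k,j}\neq\mathbf{0}$ bifurcates into two non-empty parity classes, producing $2(s(d,n)-1)$ internally commuting subsets distinguished by \eqref{eq;Yparity}, as claimed.

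\emph{Main obstacle.} There is no substantial technical difficulty: the argument is an explicit per-qubit combinatorial verification plus a direct appeal to Proposition \ref{theor:number_sets} for the count. The only care needed is in correctly identifying the single-qubit Pauli arising for each $(x_l,z_l)$ pair and in cleanly invoking the Y-parity characterization from the predecessor work so that the parity condition \eqref{eq;Yparity} is applied to the correct $\mathbf{x}$ within each set $S_{k,j}$ rather than across different sets.
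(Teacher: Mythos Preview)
Your proposal is correct and follows the same route the paper indicates: the paper does not give an independent proof of this corollary but defers to the $Y$-parity argument from \cite{tridiagonal}, and your qubit-by-qubit commutator analysis together with the count from Proposition~\ref{theor:number_sets} is exactly that argument made explicit. Your handling of $S_0$ (only the even-parity class is populated since $\mathbf{x}_0=\mathbf{0}$) is the right way to reconcile the count $2(s(d,n)-1)$ with the total number of sets.
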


\subsection{Decomposition example} \label{subsec:visualization}

In this subsection, the application of Propositions \ref{theor:sets} and \ref{theor:number_sets}, together with Corollary \ref{corr:commute}, is demonstrated using the example of a $3$-band matrix $B \in \mathbb{C}^{8 \times 8}$,
\begin{equation*}
B = 
    \begin{pmatrix}
    b_{1,1} & b_{1,2} & b_{1,3} & b_{1,4} & 0 & 0 & 0 & 0 \\
    b_{2,1} & b_{2,2} & b_{2,3} & b_{2,4} & b_{2,5} & 0 & 0 & 0 \\
    b_{3,1} & b_{3,2} & b_{3,3} & b_{3,4} & b_{3,5} & b_{3,6} & 0 & 0 \\
    b_{4,1} &  b_{4,2} & b_{4,3} & b_{4,4} & b_{4,5} & b_{4,6} & b_{4,7} & 0 \\
    0 & b_{5,2} &  b_{5,3} & b_{5,4} & b_{5,5} & b_{5,6} & b_{5,7} & b_{5.8} \\
    0 & 0 & b_{6,3} &  b_{6,4} & b_{6,5} & b_{6,6} & b_{6,7} & b_{6,8} \\
    0 & 0 & 0 & b_{7,4} & b_{7,5} & b_{7,6} & b_{7,7} & b_{7,8} \\
    0 & 0 & 0 & 0 & b_{8,5} & b_{8,6} & b_{8,7} & b_{8,8} \\
    \end{pmatrix}.
\end{equation*}

First, we apply Proposition \ref{theor:sets} to determine all strings $\mathbf{x}_{k,j}$ as defined by equations \eqref{eq:unique_solutions} and \eqref{eq:unique_solutions_0}. A (pseudo-) C code to calculate equation \eqref{eq:unique_solutions} for all $\mathbf{x}$ is given in Appendix \ref{list:unique_solutions}. 

Pauli strings are constructed by considering all possible strings $\mathbf{z} \in \mathbb{B}^n$. From equation \eqref{eq:walsh_oper}, each $(\mathbf{x}_{k,j}, \mathbf{z})$ pair translates into an $n$-length Pauli string, where each bit pair $(x_l, z_l)$, for $l = 1, \dots, n$, yields a Pauli matrix: $I \equiv (0,0)$, $Z \equiv (0,1)$, $X \equiv (1,0)$, or $Y \equiv (1,1)$. Program listing \ref{list:walsh_oper} demonstrates function $W$ that generates these strings. It returns the sign $(-1)^{\mathbf{x}\cdot\mathbf{z}}$ as stated in \eqref{eq:walsh_oper}.

The outcome is presented in Table \ref{tab:subsets_d3}, consisting of a total of $7$ sets, in accordance with Proposition \ref{theor:number_sets}. Note that the function in Appendix \ref{list:unique_solutions} also returns the number of sets. Each set contains $2^n$ members. 

Following Corollary \ref{corr:commute}, each set can be further subdivided based on the parity of the number of $Y$ operators, leading to the identification of internally commuting subsets. These subsets are highlighted in blue and red, representing even and odd parity, respectively. Notably, if $B$ is a symmetric matrix, the ``red subset'' does not appear.

\begin{table}[h]
    \centering\small
    \begin{tabular}{c|c|c}
         \shortstack{Set $\mathbf{x}_{k,j}$} & 
         $(\mathbf{z})$ set member & Pauli strings \\
         \hline 
         $\mathbf{x}_{0}=000$ &  
         \shortstack{\textcolor{blue}{000, 001, 010, 011} \\
                     \textcolor{blue}{100, 101, 110, 111}} &
         \shortstack{\textcolor{blue}{III,IIZ,IZI,IZZ,}\\ 
                     \textcolor{blue}{ZII,ZIZ,ZZI,ZZZ}}  \\
         \hline 
         $\mathbf{x}_{1,1} =001$ &  
         \shortstack{\textcolor{blue}{000, 010, 100, 110} \\ 
                     \textcolor{red} {001, 011, 101, 111}} &
         \shortstack{\textcolor{blue}{IIX,IZX,ZIX,ZZX} \\                      \textcolor{red} {IIY,IZY,ZIY,ZZY}} \\
         \hline 
         $\mathbf{x}_{1,2} =011$ &  
         \shortstack{\textcolor{blue}{000, 100, 011, 111} \\
                     \textcolor{red} {001, 010, 101, 110}} &
         \shortstack{\textcolor{blue}{IXX,ZXX,IYY,ZYY} \\                      \textcolor{red} {IXY,IYX,ZXY,ZYX}}      \\
         \hline 

         $\mathbf{x}_{1,3}=111$ &  
         \shortstack{\textcolor{blue}{000, 011, 101, 110 } \\
                     \textcolor{red} {001, 010, 100, 111 }} &
         \shortstack{\textcolor{blue}{XXX,XYY,YXY,YYX} \\                      \textcolor{red} {XXY,XYX,YXX,YYY}} \\
         \hline 
         $\mathbf{x}_{2,1}=010$ &  
         \shortstack{\textcolor{blue}{000, 001, 100, 101} \\
                     \textcolor{red} {010, 011, 110, 111}} &
         \shortstack{\textcolor{blue}{IXI,IXZ,ZXI,ZXZ} \\                      \textcolor{red} {IYI,IYZ,ZYI,ZYZ}}     \\
         \hline 
         $\mathbf{x}_{2,2}=110$ &  
         \shortstack{\textcolor{blue}{000, 001, 110, 111} \\
                     \textcolor{red} {100, 101, 010, 011}}  &
         \shortstack{ \textcolor{blue}{XXI,XXZ,YYI,YYZ} \\                      \textcolor{red} {YXI,YXZ,XYI,XYZ}}     \\
         \hline        
         $\mathbf{x}_{3,1}=101$ &  
         \shortstack{ \textcolor{blue}{000, 010, 101, 111} \\
                      \textcolor{red} {001, 100, 011, 110}} &
         \shortstack{ \textcolor{blue}{XIX,XZX,YIY,YZY} \\                      \textcolor{red} {XIY,YIX,XZY,YZX}} \\  

    \end{tabular}
    \caption{Pauli strings generated from the matrix decomposition method (Proposition \ref{theor:sets}) labeled by $(\mathbf{x}_{k,j},\mathbf{z})$ for 3-band $8\times 8$ matrix $B$. Internally commuting subsets for each $\mathbf{x}$ are characterized by \eqref{eq;Yparity} indicated by blue and red.}
    \label{tab:subsets_d3}
\end{table}

We illustrate how Pauli strings enter the decomposition with a Figure \ref{fig:visualization_8x8} where we denote each set corresponding to $\mathbf{x}_{k,j}$ as $S_{k,j}$. Each set $S_{k,j}$ is listed at the place of matrix elements that contribute to that set. The colors match the diagonals $d$ in that contribution. For the $3$-band matrix in our example, the green, orange, and yellow positions are utilized.
 
\begin{figure}[h] \centering
\includegraphics[scale=0.9]{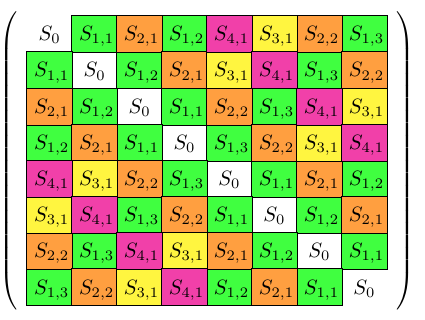}
\caption{The figure illustrates the decomposition discussed in Proposition \ref{theor:sets}. Rather than a single element, all potential Pauli strings that could contribute to the decomposition are represented as a set $S_{k,j}$. Each set corresponds to $\mathbf{x}_{k,j}$ as defined in \eqref{eq:unique_solutions}. Sets corresponding to the same bandwidth $d$ are depicted in the same color.} 
\label{fig:visualization_8x8}
\end{figure}

The weight of each Pauli string in the decomposition can be computed using equation \eqref{eq:calc_weight} with complexity $O(N\log_2(N))$. Importantly, the weights are determined uniquely by the pair $(\mathbf{x},\mathbf{z})$, eliminating the need to multiply the matrices associated with the Pauli string itself. Appendix \ref{list:calc_weight} features a (pseudo-) C code demonstrating this computation.

\subsection{Practical approach to matrix decomposition} \label{sec:best_praxis}

In our research, the $(\mathbf{x}_{k,j},\mathbf{z})$ pair is utilized to populate the data structures required for the simultaneous diagonalization of internally commuting Pauli strings. The complete data structure for it in Tableau representation as detailed in \cite{Kawase} and \cite{Berg} using a structure with cardinality $m \times 2n$ bits. In this context, $m$ represents the quantity of members within the set that can be simultaneously diagonalized, whereas $n$ denotes both the number of qubits and the length of the associated Pauli string. We note that for each $\mathbf{x}$ we fill one full structure where each row corresponds to the mapping $(\mathbf{x},\mathbf{z_k})=(x_{1},\dots,x_{n}|z_{k,1},\dots,z_{k,n})$  
\begin{equation*}
    (\mathbf{x},\mathbf{z_k}),\; k=[1,m] \rightarrow \left[ \begin{matrix}
        x_{1} & \cdots & x_{n} &|& z_{1,1} & \cdots & z_{1,n} \\
         \vdots & \cdots &  \vdots &|&  \vdots & \cdots & \vdots \\
        x_{1} & \cdots & x_{n} &|& z_{m,1} & \cdots & z_{m,n} 
    \end{matrix}\right]
\end{equation*}

We systematically apply Clifford transformations composed of $H, S, CX, CZ$ gates that can simultaneously diagonalize all Pauli strings to the Tableau. Leveraging the methodology described in \cite{Kawase}, we finalize the process by clearing the left side of the Tableau, retaining only the Z-block with $z$-values. The identified sequence of Clifford gates forms the diagonalizing transformation which we will later denote $R_{k,\gamma}$ (section \ref{sec:circuit_complexity}). The Z-block corresponding to the diagonalized operators $\Lambda_{k,\gamma}$ relates to $R_z(\theta)$ rotations as described in \cite{Welsh}, where $\theta$ aligns proportionally with decomposition weights. Since the process is independent of specific weights, rotations are parametrized. The program writes out the resultant circuits in QASM-2 format \cite{qasm-2}, with parametrization via $\theta$. This approach circumvents explicit Pauli matrices generation and Pauli matrix multiplication.

\section{High-order scheme for the one-dimensional wave equation} \label{sec:application}

In this section, we investigate how wave equation with a variable speed profile $c(x)$ (a non-negative continuous function) and Dirichlet boundary conditions can be incorporated into a quantum algorithm using high-order central approximations. Consider,

\begin{equation}
\begin{split}
    &\frac{d^2}{dt^2} u (t,x) = \frac{d}{dx} \left( c^2(x) \frac{d}{dx} u (t,x) \right), \\
    &u(0,x) = u_0(x), \\
    &\frac{d}{dt} u(0,x) = 0, \\
    &u(t,0) = u(t,l) = 0.
\label{eq:1d_wave-q}
\end{split}
\end{equation}

Here, the solution $u(t,x) \in C^2((0, \infty) \times I)$ where $I=(0,l).$ 
When $u_0(x) = \sin(\pi \frac{x}{l})$ and  $c(x)$ is constant, the solution is expressed as
\begin{equation}
u_{sw}(x,t) = \sin(\pi \frac{x}{l}) \cos(\pi \frac{c t}{l}).
\label{eq:exact_wave_sol}
\end{equation}
This standing wave solution will be used as reference to compare with the numerical results in section \ref{subsec:numerical}.

\subsection{The Quantum algorithm}\label{subsec:quant_algor}
Consider the right-hand side of the wave equation
\begin{equation}
\frac{d}{dx} \left( q(x) \frac{d}{dx} u (t,x) \right) \approx  \hat{B}_{k}  \left( q(x) \hat{B}_{k} u (t,x)\right),
\end{equation}
with $q(x) = c^2(x)$, the parameter $c(x)$ is the wave propagation speed. The operator $\hat{B}_{k}$ approximates the first derivative using the central finite difference method with an accuracy of order $\kappa = 2k$, as detailed in Section \ref{sec:High-order-schemes}. It is important to note that applying the chain rule in this context is not recommended, as the expression in brackets represents the total mass flow across the boundary from a physical perspective \cite{Langtangen}.

The function $q(x)$, when discretized, is represented by a diagonal matrix $I_q$, with its values located along the diagonal. The matrix $B_k$ is introduced as the matrix representation of the operator $\hat{B}_{k}$, with the boundary conditions discussed in a subsequent subsection. Given the relationships $I_q = I_c I_c$ and $B_k^T = -B_k$, the right-hand side of the wave equation can be written in matrix form as:
\begin{equation}
B_k I_q B_k = -B_k I_c (B_k I_c)^T = -B_k(c) B_k^T(c) \equiv L_k(c),
\label{eq:laplacian_wave}
\end{equation}
where $B_k(c) \equiv B_k I_c$. This formulation allows the incorporation of variable speed coefficients into the matrix $B_k$ by scaling each column $j$ of $(B_k)_{i,j}$ by the corresponding discretized speed value $c_j$. Therefore, by using this approach, non-constant coefficients can be easily incorporated into the finite difference method. This allows for a more flexible treatment of problems involving spatially varying parameters.

Following Costa et al.~\cite{Costa}, instead of directly analyzing the original discretized wave equation, we can consider the Schrodinger equation. The key idea behind this transformation is that the right-hand side of the wave equation is replaced by the operator $-\frac{1}{h^2} B_k B_k^T$ (here we write step size $h$ explicitly). This substitution leads to the formulation of the Schrodinger equation with the Hamiltonian:
\begin{equation}
    H_k =\frac{1}{h}
    \begin{pmatrix}
    0 & B_k\\
    B_k^T & 0\\
    \end{pmatrix}.
    \label{eq:hamil_wave}
\end{equation}
That is the two-component quantum state $\vec{\psi} = (\vec{\phi}_V, \vec{\phi}_E)^T$ evolves according to:
\begin{equation}
\frac{d}{dt}
    \begin{pmatrix}
        \vec{\phi}_V\\ \vec{\phi}_E 
    \end{pmatrix}
    = -i H_k
    \begin{pmatrix}
        \vec{\phi}_V \\ \vec{\phi}_E
    \end{pmatrix}
    \label{eq:shrod_eq},
\end{equation}
where $\vec{\phi}_V$ is a solution to considered wave equation, and $h$ represents the discretization step.

Therefore, according to \eqref{eq:laplacian_wave}, if we substitute the Hamiltonian
\begin{equation}
    H_k(c) =\frac{1}{h}
    \begin{pmatrix}
    0 & B_k(c)\\
    B_k^T(c) & 0\\
    \end{pmatrix}
    \label{eq:hamil_wave_mod}
\end{equation}
in \eqref{eq:shrod_eq}, the upper component $\vec{\phi}_V$ of quantum state $\vec{\psi}$ will be a solution of the wave equation \eqref{eq:1d_wave-q}.

We also note, that if the size of matrix $B_k(c)$ is $N \times N$, with $N=2^n$, then the size of $H_k(c)$ is $2^{n+1} \times 2^{n+1}$.

\subsubsection{Dirichlet boundary conditions}\label{subsec:boundary_cond}

We consider the application of Dirichlet boundary conditions to the problem defined in \eqref{eq:1d_wave-q}. The derivative operator, formulated in matrix form (as outlined in Section \ref{sec:High-order-schemes}), encounters limitations in accurately approximating derivative values near the boundaries. This discrepancy arises because the elements within the vector resulting from this operator do not yield desired approximations for the function's derivatives at these boundary points. Specifically, as demonstrated in Section \ref{sec:High-order-schemes}, this limitation is evident when examining the first and last rows of the matrix $B_1$. This approach works only under the assumption that the function of interest is equal to zero outside the domain boundaries, as discussed in detail by Costa et al. \cite{Costa}. 
If that is not the case, the boundary conditions must be accommodated within the Pauli decomposition and the resulting boundary scheme should be no more than an order lower than those of the interior scheme to maintain scheme's accuracy \cite{Gustafsson1975}.

In this study, we employ the following methodology (more details in Appendix \ref{sec:appendix_boundary}). Given the boundary conditions $u(t,0) = u(t,l) = 0$, it follows that $\frac{d^2}{dt^2} u(t, 0) = \frac{d^2}{dt^2} u(t, l) = 0 $. This condition necessitates that the solution $u(t, x)$ be continued anti-symmetrically at the boundary when employing the central difference operator $\hat{B}_k$ (as specified in equation \eqref{eq:diff-operator}), meaning $u(t, -x) = -u(t, x)$. Meanwhile, the wave speed $c(x)$ extends symmetrically, ensuring $c(-x) = c(x)$.

In general case  $(-b_{k}, \dots, -b_1, 0, b_1, \dots, b_{k})$ the upper left corner of matrix $B_k(c)$ with zero boundary conditions and incorporated speed profile is given by

\begin{figure}[!ht]
\centering
\includegraphics[scale=0.66]{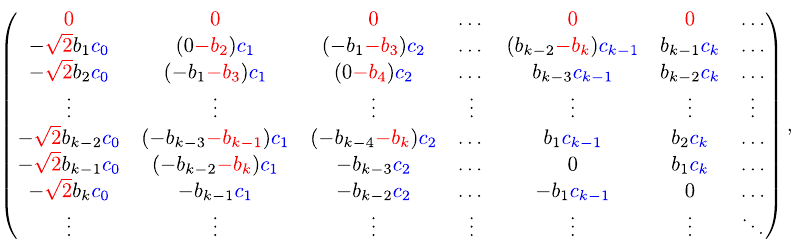}
\end{figure}
where all changes due to boundary conditions to the original matrix are shown with red and the incorporation of the speed profile $c_{j} \equiv c(x_j)$ is shown in blue. The bottom right corner mirrors the same behavior except for a reflection about the anti-diagonal.

This way of adjusting the boundary conditions has been applied for all the numerical results in our study.

\subsubsection{Time evolution of the Hamiltonian} \label{ses:Time_evolution}

To execute Hamiltonian evolution on a quantum computer, it must be decomposed into fundamental operations. This can be accomplished by expressing the Hamiltonian as a sum of terms that are easy to implement: $H_k = \sum_{\gamma=1}^{\Gamma_k} H_{k, \gamma}$ \cite{berry2007efficient}. Following earlier research \cite{tridiagonal}, $H_{k, \gamma}$ is composed of mutually commuting Pauli strings, with $\Gamma_k = s(k,n)$, where $B_k \in \mathbb{R}^{N\times N}$ and $N=2^n$. The Trotter formula, incorporating various accuracy levels, is used to address the non-commutative nature of the $H_{k, \gamma}$ operators \cite{Childs_Trotter}.

Trotter formulas of order $1$ and $2$ and higher even orders $p=2k, k=2,3,4,\dots$ are expressed as
\begin{align}
    S_{1}(t) &= e^{-itH_{\Gamma}} \dots e^{-itH_{1}}, \\
    S_{2}(t) &= e^{\frac{-it}{2}H_{1}}  \dots  e^{\frac{-it}{2}H_{\Gamma}} e^{\frac{-it}{2}H_{\Gamma}} \dots e^{\frac{-it}{2}H_{1}}, \\
    S_{2k}(t) &= S_{2k-2}^2(s_k t) S_{2k -2}((1-4s_k) t) S_{2k-1}^2(s_k t),
    \label{eq:Trotter_formula-p}
\end{align}
where $s_k = 1/(4-4^{1/(2k-1)})$. The Trotter error $\epsilon_{\text{tr}}$ is defined as
\begin{equation}
    \epsilon_{\text{tr}}(t,r) = \norm{e^{-iH_{k}t} - S_{p}^r(t/r)}, \label{eq:errors_trotter}
\end{equation}
and is influenced by the number of Trotterization steps $r$, evolution time $t$, the order of the Trotter formula $p$, and the Hamiltonian $H_k$.
According to \cite{berry2007efficient}, the scaling for the Trotter error is
\begin{equation}
    \epsilon_{\text{tr}}(t, r) = O\left(\frac{\left(2 \Gamma_k 5^{\lfloor p/2 \rfloor - 1} \norm{H_k}t \right)^{p+1}}{r^{p}}\right).
    \label{eq:scaling_tr}
\end{equation}
Each Trotterization step progresses the integration by $\delta t=\frac{t}{r}$, and to enhance computational precision, $\delta t$ should be diminished by increasing $r$. We examine whether a more precise discretization scheme can affect the Trotter time step.

To that end we assess the precision of our method by employing these metrics: the discretization approximation error $\epsilon_{\text{ds}}$, outlined in equation \eqref{eq:ds_error}; the Trotter error $\epsilon_{\text{tr}}$ from Equation \eqref{eq:errors_trotter}; and the numerical solution error $\epsilon_{\text{ns}}$, which encapsulates the total inaccuracy from all approximation stages:
\begin{align}
    \epsilon_{\text{ns}}(t) &= \norm{\vec{u}(t) - \vec{\phi}_V(t)},
   \label{eq:errors}
\end{align}
where, $\vec{u}(t)$ represents the discretized exact solution (for example, in case of standing wave it is given by \eqref{eq:exact_wave_sol}), and $\vec{\phi}_V(t)$ is the upper portion of the state vector $\vec{\psi}(t)$ (see \eqref{eq:shrod_eq}). We note that, since the initial condition for the quantum algorithm is normalized, the exact solution must be adjusted accordingly to reflect this normalization.

\subsection{Numerical results} \label{subsec:numerical}

This section presents the numerical experiments designed to assess the performance and accuracy of the proposed quantum algorithm for the solution of the wave equation based on the system dynamics.  During these experiments, the wave equation \eqref{eq:1d_wave-q} was evaluated with parameters set to $l=5$, $t=1$, and $c=1$. In alignment with \cite{Costa}, and chosen initial conditions in \eqref{eq:1d_wave-q} in all numerical experiments in this section we set initial condition of the Schrodinger equation \eqref{eq:shrod_eq} as 
\begin{equation}
\vec{\psi}(t=0) = 
    \begin{pmatrix}
        \vec{\phi}_V (t=0) \\ \vec{\phi}_E (t=0)
    \end{pmatrix} =
        \begin{pmatrix}
        \vec{u_0}/\norm{\vec{u_0}} \\ \vec{0},
    \end{pmatrix}
    \label{eq:quantum_init}
\end{equation}
this way $\norm{\vec{\psi}(t=0)}=1$. As mentioned in previous section, the solution to the wave equation will be given by first component of $\vec{\psi}(t)$, that is $\vec{\phi}_V (t)$, or, more precisely
\begin{equation}
    \vec{\phi}_V (t) = \frac{\vec{u}_{sw}(t)}{\norm{\vec{u_0}}}.
    \label{eq:quantum_solution}
\end{equation}
Thus, in what follows, we compare the results obtained in numerical simulations with equation \eqref{eq:quantum_solution}. We also recall that the number of discretization points in this section is given by $N = 2^n$ with $n+1$ representing the total number of qubits necessary for the Hamiltonian simulation.

The first numerical experiment was aimed to assess the effect of discretization error and correctness of implementation of boundary conditions for high order schemes. To do so we calculated  $e^{-i H_k t} \vec{\psi} (t=0)$ directly, with progressively finer discretization and compared it to the exact solution \eqref{eq:quantum_solution}. Figure \ref{fig:standing_comparison_error_qubits} shows that we achieve high computational accuracy with a modest number of qubits. Therefore, we performed numerical experiments with up to $8$ qubits. 

\begin{figure}[ht]
\centering
\includegraphics[width=\linewidth]{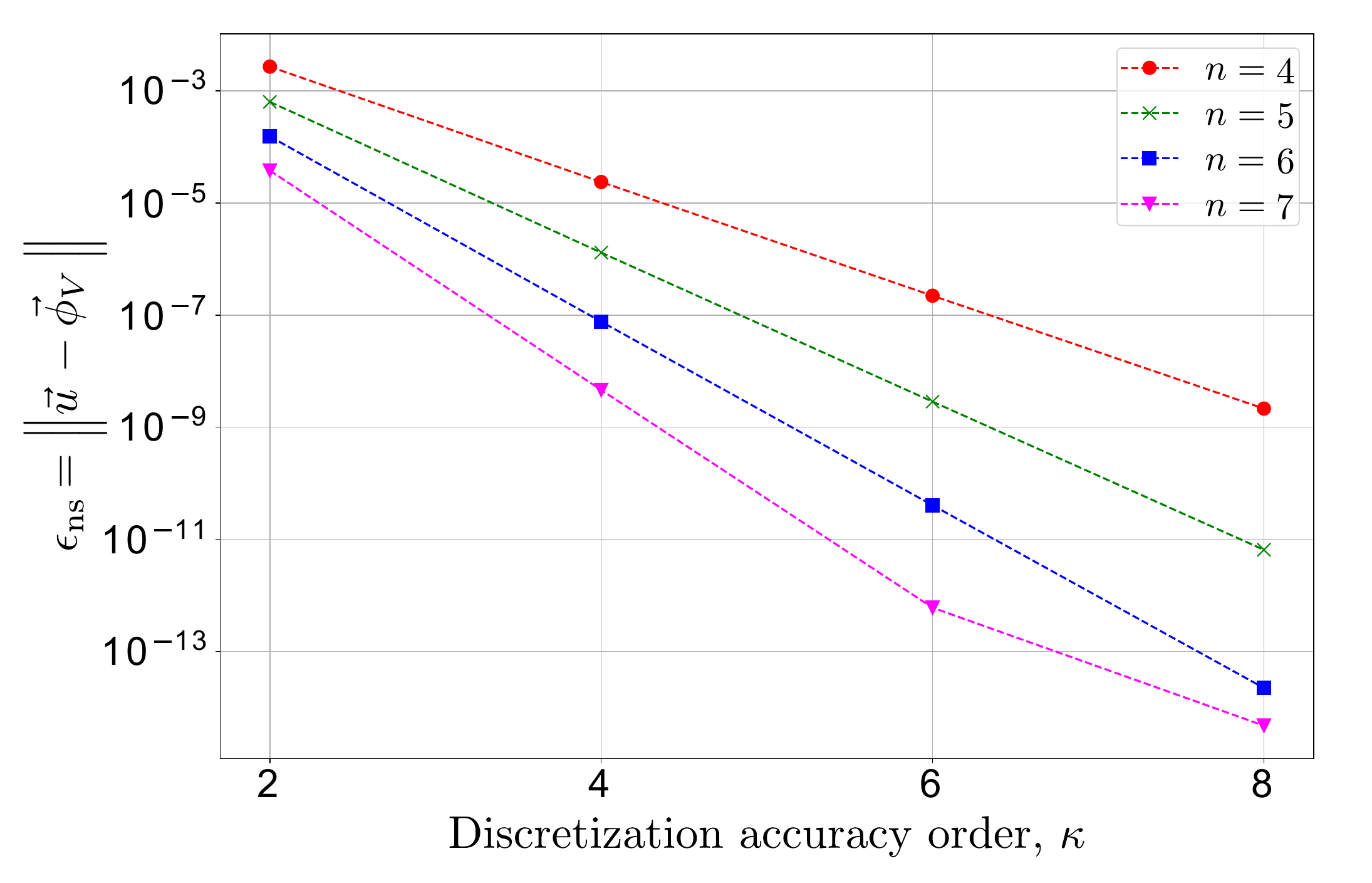}
\caption{Comparison of $\vec{\phi}_V (t)$ - solution obtained as $e^{-i H_k t} \vec{\psi} (t=0)$ with $\vec{u} (t) = \vec{u}_{sw}(t)/\norm{\vec{u_0}}$ an analytical standing wave solution at $t=1$, with $l=5$ and $c=1$ for different accuracy orders $\kappa=2k$ and different number of qubits for discretization $n$ (the total number of qubits is given by $n+1$).} 
\label{fig:standing_comparison_error_qubits}
\end{figure}

Moreover, Figure \ref{fig:standing_comparison_error_qubits} illustrates that increasing the approximation order leads to fewer qubits needed, indicating (a) the Hamiltonian in \eqref{eq:hamil_wave} efficiently represents higher-order discretization, and (b) verifies the accurate application of boundary conditions outlined in Section \ref{subsec:boundary_cond}.

\subsubsection{Results for the quantum algorithm}

The quantum algorithm for simulating Hamiltonian evolution requires executing $r$ Trotterization steps, with the number of steps affecting the precision of the solution. We employed binary search to find the minimal number of steps $r$ needed for a specific accuracy based on the chosen discretization context $(n,\kappa)$. To enable comparison of numerical results, the Trotterization order $p$ was fixed at $p=2$. This is a pragmatic decision, as a lower $p$ results in too many $r$ steps, reducing $t/r$ to below machine precision, while with $p \geq 4$, the number of steps is too small for a detailed comparison of results.

After determining the Trotterization steps $r$ for each precision goal in the discretization context $(n,\kappa)$, Figure \ref{fig:fixed_trotter_error} illustrates our key results, examining the effects of different qubit counts and discretization methods on accuracy.

\begin{figure}[ht]
\centering
\includegraphics[width=\linewidth]{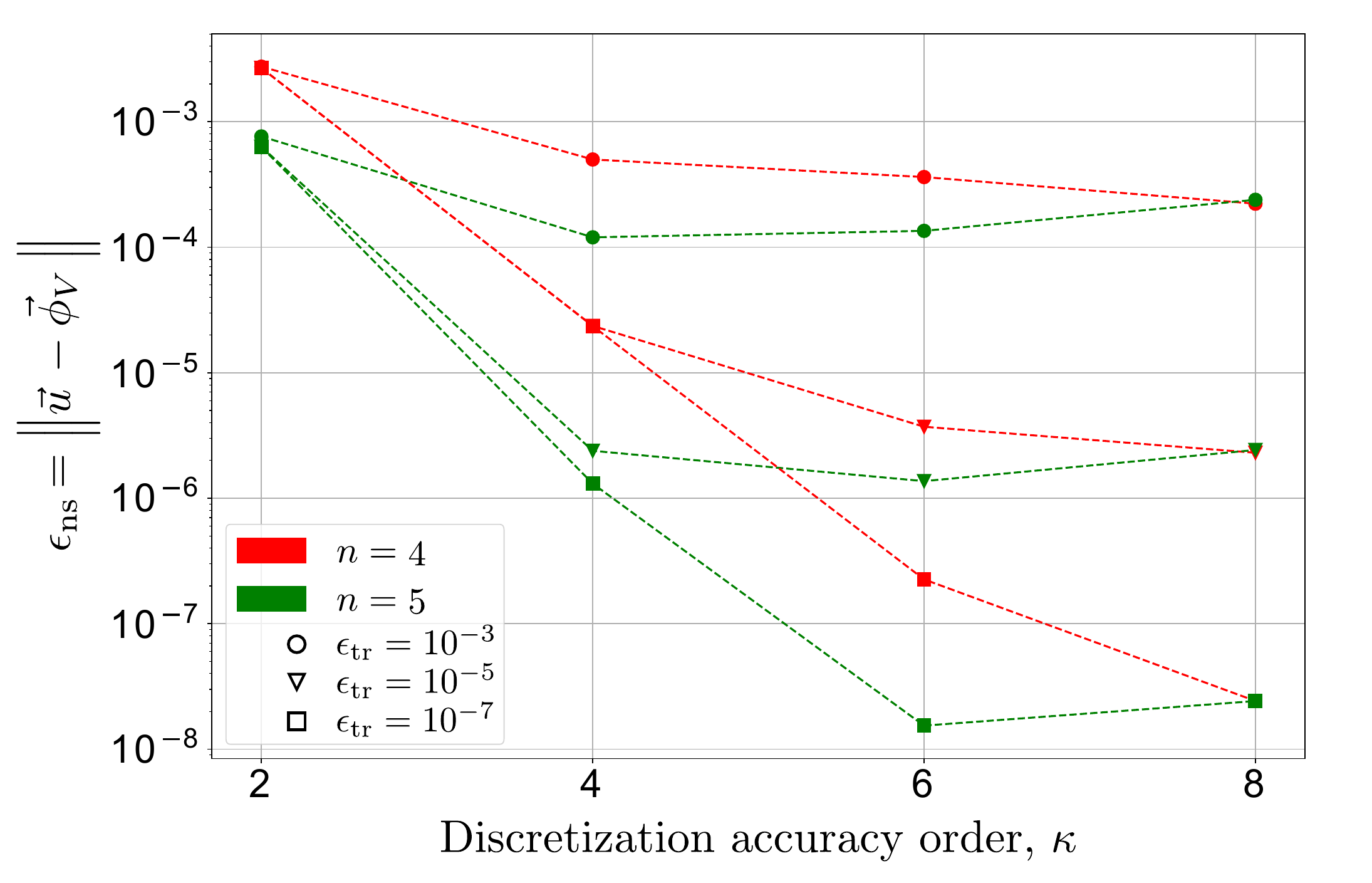}
\caption{Comparison of $\vec{\phi}_V (t)$ - solution obtained with proposed quantum algorithm with $\vec{u} (t) = \vec{u}_{sw}(t)/\norm{\vec{u_0}}$ with fixed Trotter errors of $10^{-3}$, $10^{-5}$, and $10^{-7}$, represented by different marker shapes.  Results for $4$ and $5$ qubits are given with red and green respectively (the total number of qubits is given by $n+1$). The Trotter formula is set to order $p=2$; the wave equation parameters $l=5$, $t=1$, and $c=1$.} 
\label{fig:fixed_trotter_error}
\end{figure}

Figure \ref{fig:fixed_trotter_error} demonstrates the following points:
\begin{itemize}
    \item 
The discretization precision shown in Figure 1 is attainable with an adequate number of Trotterization steps. 
For instance, using 5 qubits and a 6th-order accuracy scheme results in an overall error of approximately $10^{-8}$, paralleling the expectations illustrated in figure \ref{fig:standing_comparison_error_qubits}.
    \item 
 By employing a higher-order discretization scheme, enhanced solution accuracy is achievable using the same qubit count. For instance, moving from a 4th to a 6th order scheme with 5 qubits improves accuracy, provided it is not constrained by the imposed Trotterization's accuracy level.
 \end{itemize}
Together, Figure \ref{fig:standing_comparison_error_qubits} and Figure \ref{fig:fixed_trotter_error} illustrate that increasing qubits for discretization or employing higher-order schemes improves accuracy, provided enough Trotterization steps are executed to surpass the plateau shown in Figure \ref{fig:fixed_trotter_error}. Thus, this is our final conclusion:
 \begin{itemize}
    \item 
Trotterization accuracy serves as the main limitation for the precision of solutions due to the necessity of providing sufficient number of Trotterization steps, which notably impacts the gate complexity (see equation \eqref{eq:resulted_gates_p-2}). Even with the use of sophisticated discretization techniques and an increase in qubit numbers, the attained precision will be confined by the specified number of Trotterization steps. 
\end{itemize}

Examining the final finding above, we suggest that the overall numerical error results from the addition of the Trotter error $\epsilon_{\text{tr}}$ and the discretization error $\epsilon_{\text{ds}}$, as outlined in the following proposition.

\begin{restatable}[Error for solving wave equation]{proposition}{errorInter}\label{theor:error_inter} 

The bound for numerical solution error $\epsilon_{\text{ns}}(t)=\norm{\vec{u}(t)-\vec{\phi}_V(t)}$, where $\vec{u}(t)$ is the exact solution of the wave equation and $\vec{\phi}_V(t)$ is the numerical solution obtained 
with $2k$ discretization order scheme and $r$ Trotter steps, can be approximated as
\begin{equation}
\begin{split}
\epsilon_{\text{ns}} (t) &\leq \epsilon_{\text{tr}}(t, r) + tc_{max}\epsilon_{ds}(k),
    \label{eq:error_interconect}
\end{split}
\end{equation}
with $c_{\max} = \max_{0\leq x \leq l} c(x)$, where $c(x)$ and $l$ are from \eqref{eq:1d_wave-q}, $\epsilon_{\text{tr}}(t, r)$ represents the Trotter error  \eqref{eq:errors_trotter}, and $\epsilon_{\text{ds}}(k)$ represents the discretization error  \eqref{eq:ds_error}. 
\end{restatable}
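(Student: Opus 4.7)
The plan is to split the total error into a Trotter part and a spatial semi-discretization part via a triangle inequality through an auxiliary intermediate state. Let $\vec{\phi}_V^{\star}(t)$ denote the upper block of the exactly-evolved state $e^{-iH_k(c)t}\vec{\psi}(0)$, i.e.\ the solution that uses the discretized Hamiltonian but no Trotter approximation. Then
\begin{equation*}
\epsilon_{\text{ns}}(t) \leq \norm{\vec{u}(t) - \vec{\phi}_V^{\star}(t)} + \norm{\vec{\phi}_V^{\star}(t) - \vec{\phi}_V(t)},
\end{equation*}
and the goal is to identify the first term as the spatial error and the second as the Trotter error.

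The Trotter piece is the easier one. Both $\vec{\phi}_V^{\star}$ and $\vec{\phi}_V$ are obtained by applying the two different linear maps $e^{-iH_k(c)t}$ and $S_p^r(t/r)$ to the same normalized initial state $\vec{\psi}(0)$ and then reading off the upper block. Using $\norm{\vec{\psi}(0)} = 1$ and the fact that projecting onto the upper block has operator norm at most $1$, one gets $\norm{\vec{\phi}_V^{\star}(t) - \vec{\phi}_V(t)} \leq \norm{e^{-iH_k(c)t} - S_p^r(t/r)} = \epsilon_{\text{tr}}(t, r)$ directly from definition \eqref{eq:errors_trotter}.

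The spatial piece is the main technical step and uses Duhamel's formula (variation of constants). First I would lift the exact PDE solution to the $2N$-dimensional Schrödinger space: set $\vec{\psi}^{\text{ex}}(t) = (\vec{u}(t), \vec{w}(t))^T$, where $\vec{w}(t)$ is a companion vector chosen so that the first-order system would be satisfied exactly if $B_k(c)$ were replaced by the true continuous derivative operator $D$ evaluated on the grid (so $\vec{w}$ encodes $c(x)\,\partial_x u$ up to the $1/h$ normalization, with $\vec{w}(0)=\vec{0}$ matching the initial condition since $u_t(0,x)=0$). Next, compute the residual
\begin{equation*}
\vec{r}(t) \;:=\; \frac{d}{dt}\vec{\psi}^{\text{ex}}(t) + i\,H_k(c)\,\vec{\psi}^{\text{ex}}(t),
\end{equation*}
and observe that each block of $\vec{r}(t)$ has the form $\frac{i}{h}(D(c)-B_k(c))\vec{v}(t)$ with $\vec{v}(t)$ equal to $\vec{u}(t)$ or $\vec{w}(t)$. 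Since $B_k(c)=B_k I_c$, this difference factors as $(D-B_k)I_c$, so $\norm{\vec{r}(s)} \leq \norm{I_c}\cdot\norm{(D-B_k)\vec{v}(s)} \leq c_{\max}\,\epsilon_{\text{ds}}(k)$ by \eqref{eq:ds_error} and $\norm{I_c}=c_{\max}$. Applying Duhamel gives
\begin{equation*}
\vec{\psi}^{\text{ex}}(t) - e^{-iH_k(c)t}\vec{\psi}(0) = \int_0^t e^{-iH_k(c)(t-s)}\,\vec{r}(s)\,ds,
\end{equation*}
and the unitarity of $e^{-iH_k(c)(t-s)}$ plus projection onto the upper block yields $\norm{\vec{u}(t)-\vec{\phi}_V^{\star}(t)} \leq \int_0^t \norm{\vec{r}(s)}\,ds \leq t\,c_{\max}\,\epsilon_{\text{ds}}(k)$.

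The main obstacle I expect is the construction of the companion vector $\vec{w}(t)$ and the bookkeeping needed so that the residual really factors cleanly as the discretization-error operator acting on a bounded quantity. One must be careful about the $1/h$ prefactor in $H_k(c)$ and about how the variable speed $c(x)$ is distributed between the two off-diagonal blocks, verifying that the antisymmetric structure of $B_k(c)$ concentrates all $c$-dependence into the diagonal factor $I_c$ so that no extra derivative of $c$ surfaces in the residual. Once that is settled, adding the two bounds produces \eqref{eq:error_interconect}; the remainder of the argument is a standard energy-type/Duhamel estimate.
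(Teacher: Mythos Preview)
Your proposal is sound and arrives at the same bound, but the route differs from the paper's for the spatial piece. Both arguments insert the intermediate exactly-evolved state $e^{-iH_k t}\vec{\psi}(0)$ and identify the Trotter contribution exactly as you do. For the discretization contribution, however, the paper works at the operator level: it introduces the continuous-derivative block operator $H_{d/dx}$, sets $R=H_{d/dx}-H_k$, asserts that $H_{d/dx}$ and $H_k$ commute, and then bounds $\norm{e^{-itH_{d/dx}}-e^{-itH_k}}=\norm{(e^{-itR}-I)e^{-itH_k}}\le e^{t\norm{R}}-1$, finally estimating $\norm{R}$ componentwise as $\epsilon_{\text{ds}}(k)\sqrt{c_{\max}^2+1}$ and linearizing the exponential. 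Your Duhamel/variation-of-constants argument instead bounds the state-level residual and integrates it against the unitary semigroup, which gives the linear-in-$t$ factor directly without any commutator hypothesis. The paper's path is a line shorter once the commutation is granted, but that step is delicate for non-constant $c(x)$; your path is the standard consistency-plus-stability estimate and is more robust, at the price of the block/normalization bookkeeping you already flagged (in particular, note that the two off-diagonal blocks contribute slightly asymmetrically---one as $(D-B_k)I_c$ and the other as $I_c(D-B_k)$---which is why the paper ends up with $\sqrt{c_{\max}^2+1}$ before approximating to $c_{\max}$).
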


Consequently, in order to evaluate how discretization precision influences the results and considering equations \eqref{eq:scaling_tr} and \eqref{eq:scaling_ds}, the scaling of the error in the numerical solution can be expressed as
\begin{equation*}
    \epsilon_{\text{ns}} (t) = O\left(\frac{\left(2 \Gamma 5^{\lfloor p/2 \rfloor - 1} \norm{H_k}t \right)^{p+1}}{r^{p}} +  t c_{\max} h^{2k-1/2} \right),
\end{equation*}
which indicates that for the fixed error the number of Trotter steps grows polynomially as the discretization accuracy improves. This is illustrated in figure \ref{fig:trotter_power} for a given $\epsilon_{ns} = 10^{-5}$.

To determine the minimum number of Trotter steps required to achieve the desired precision, we used a binary search algorithm. Figure \ref{fig:trotter_power} illustrates the results, indicating the relationship between the necessary Trotter steps and both the number of qubits for discretization $n$ and the discretization scheme's accuracy $\kappa$. It is important to note that for $\kappa = 4$, the graph starts at the 5th qubit, as achieving an accuracy of $10^{-5}$ with fewer qubits is infeasible, as demonstrated in Figure \ref{fig:standing_comparison_error_qubits}.

\begin{figure}[H]
\centering
\includegraphics[width=\linewidth]{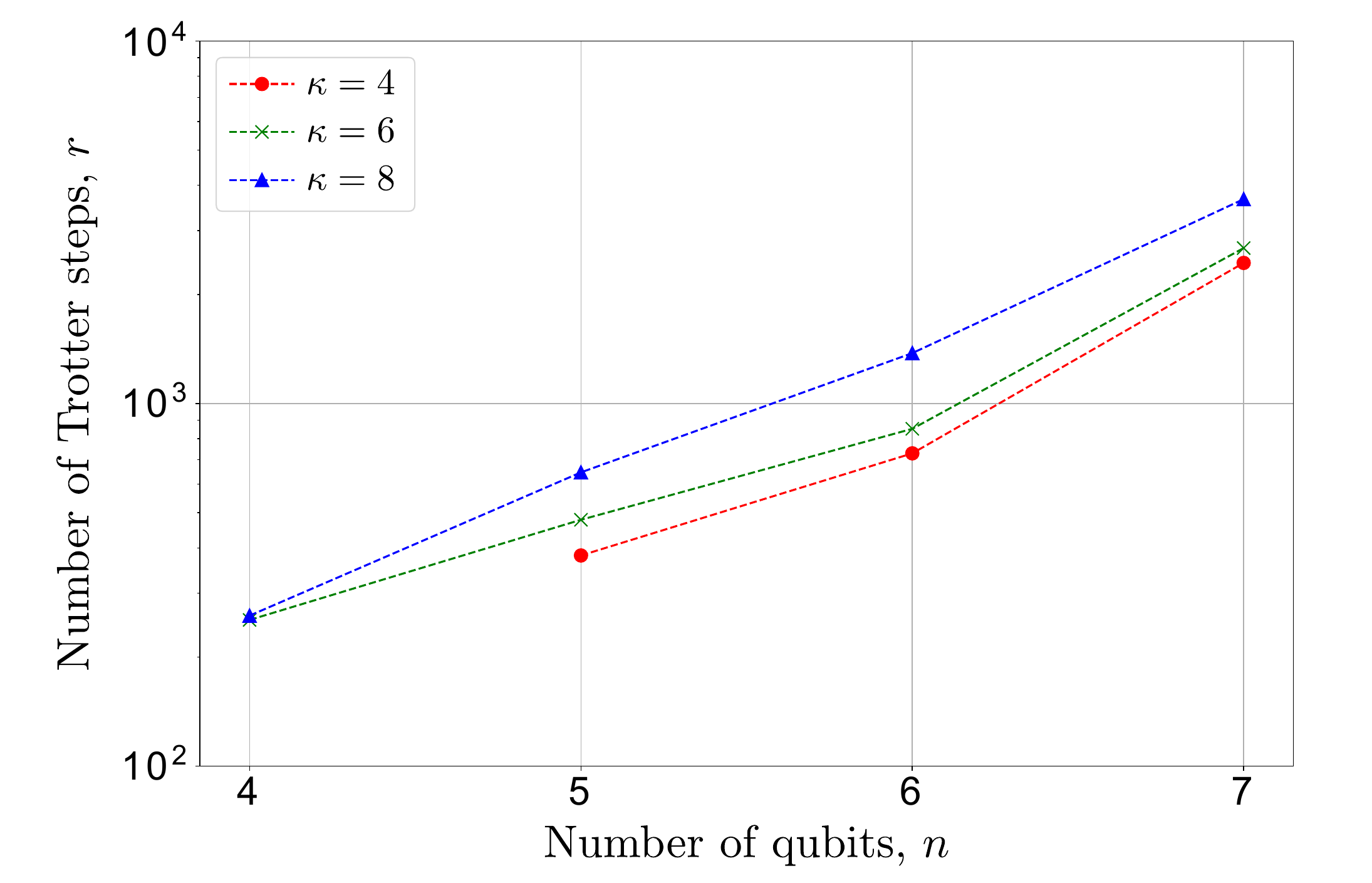}
\caption{The relationship between the number of Trotter steps $r$ and the number of qubits for a discretization $n$ (total qubits $n+1$), using various discretization orders $\kappa$, necessary to reach an error of $\epsilon_{\text{ns}} = 10^{-5}$. The Trotter formula is utilized at order $p=2$, with wave equation parameters set to $l=5$, $t=1$, and $c=1$.} 
\label{fig:trotter_power}
\end{figure}

In numerous practical scenarios, reaching a specified error level, symbolized as $\epsilon_{\text{set}}(t, k, r)$, is essential. Proposition \ref{theor:error_inter} offers a structure to guarantee that the algorithm stays within this error limit. By selecting 
\begin{equation*}
    \epsilon_{\text{set}}(t, k, r) = \epsilon_{\text{tr}}(t, r) + tc_{max}\epsilon_{ds}(k)
\end{equation*} 
we make certain that $\epsilon_{\text{ns}}(t) \leq \epsilon_{\text{set}}(t, k, r)$.

Moreover, it is clear that $\epsilon_{\text{tr}}(t, r)$ must not surpass $\epsilon_{\text{set}}(t, k, r)$ (in other words, the number of Trotter steps $r$ cannot be decreased anymore without breaching the error limit). Hence, the least number of Trotter steps is defined when $\epsilon_{\text{tr}}(t, r)$ equals $\epsilon_{\text{set}}(t, k, r)$.

\subsection{Circuit complexity} \label{sec:circuit_complexity}

As highlighted in Section \ref{subsec:quant_algor}, the solution of the wave equation \eqref{eq:1d_wave-q} may be obtained by evolving the Hamiltonian \eqref{eq:hamil_wave_mod}, i.e., by computing  $e^{-i H_k t} \vec{\psi}(t=0)$. In this subsection, we estimate the computational complexity of implementing the propagator $e^{-i H_k t}$, specifically in terms of the required number of one and two qubit gates.

Section \ref{ses:Time_evolution} describes the implementation of $e^{-i H_k t}$ using the Trotter formula \eqref{eq:Trotter_formula-p}. The complexity of this approach is given by $g = r g_1$, where $r$ is the number of Trotter steps, and $g_1$ is the number of one and two qubit gates in one Trotter step.

In following we consider that matrix $B_k$ is of size $2^n \times 2^n$. To estimate $g_1$ recall that we decompose the Hamiltonian as $H_k = \sum_{\gamma=1}^{\Gamma_k} H_{k, \gamma}$, where each $H_{k, \gamma}$ represents a sum of mutually commuting Pauli strings with respective weights, making $\Gamma_k = s(d,n)$, with $s(k,n)$ given by Proposition \ref{theor:number_sets}. 

The mutually commuting Pauli strings in each $H_{k, \gamma}$ can be simultaneously diagonalized using diagonalization operators $R_{k,\gamma}$ as shown in \cite{Kawase, Berg}. The $R_{k,\gamma}$ operators are derived from elements of the Clifford algebra, such as the  single qubit Hadamard gates and two-qubit $CX$ and $CZ$ gates, as discussed in Section \ref{sec:best_praxis}. The diagonalization operator achieves a gate count that scales as $O(n^2)$. Following diagonalization, we approximate one Trotter step (for simplicity, shown here in first-order form) as
\begin{equation}
    e^{-i \frac{t}{r} \sum_{\gamma=1}^{\Gamma_k} H_{k, \gamma}} \approx \prod_{\gamma=1}^{\Gamma_k} R^\dagger_{k,\gamma} e^{-i \frac{t}{r} \Lambda_{k, \gamma}} R_{k,\gamma},
\end{equation}
where $\Lambda_{k, \gamma}$ is a diagonal matrix. Welsh et al. \cite{Welsh} state that the diagonal exponentiation circuit uses $O(N')$ gates, where $N' < 2^n$, without the need for auxiliary qubits.

The Trotter step's gate count comprises $s(k,n)$ diagonal exponentials and $2s(k,n)$ diagonalization operators.

We also assume that $B_k(c)$ exhibits sparsity, meaning $k << 2^n$, a typical case, from which we derive the subsequent scaling
\begin{equation}
    s(k,n) \approx k(1+n-\log_2(k)) \approx O(kn).
\end{equation}

Combining these findings, we determine the gate complexity to be
\begin{equation}
    g = O(r (2n^2 + 2^n) s(k,n)) = O(2^n r k n),
\end{equation}
where $r$ represents the count of Trotter steps, $2^n$ indicates the dimension of $B_k(c)$, and $2k$ denotes the precision level of the initial derivative approximation.

The Trotter step count $r$ is derived from equation \eqref{eq:scaling_tr}, with $H_k$ expressed as a sum of $\Gamma = s(k,n)$ matrices formed from commuting sets. Given  $\norm{H_k} \leq |\frac{2 k c_{\max}}{h}| = O(kN)$ the total gate count $g$ becomes
\begin{equation}
    g = O\left(t 4^n k^3 n^2 5^{\lfloor p/2 \rfloor} \left( \frac{2}{5} \frac{t 2^n k^2 n}{\epsilon_{tr}}\right)^{1/p} \right).
    \label{eq:resulted_gates_k}
\end{equation}
Using \eqref{eq:scaling_ds}, we can substitute $k$ with the discretization error $\epsilon_{ds}$, so that $k = O\left( \frac{\log_2 (1/\epsilon_{ds})}{2n} \right)$. Then,
\begin{equation}
    g = O\left(t 4^n \frac{\log_2^3 (1/\epsilon_{ds})}{n} 5^{\lfloor p/2 \rfloor} \left( \frac{2}{5} \frac{t 2^n \log_2^2 (1/\epsilon_{ds})}{n \epsilon_{tr}}\right)^{1/p} \right).
    \label{eq:resulted_gates_eps_ds}
\end{equation}
The primary influences on complexity are the diagonal matrix exponential's implementation and the Hamiltonian norm. In our simulations, we employ a Trotter order of $p = 2$ and $N = 2^n$, yielding
\begin{equation}
    g = O\left( \frac{t^{1.5} N^{2.5}}{ n^{1.5}} \log_2^4(1/\epsilon_{ds}) \sqrt{1/\epsilon_{tr}} \right).
    \label{eq:resulted_gates_p-2}
\end{equation}
Equation \eqref{eq:resulted_gates_p-2} shows through complexity analysis that Trotter error have a more substantial impact on the total gate count than discretization error. To reduce the total gate count, we recommend maximizing the Trotterization error while minimizing the discretization error to achieve the required overall solution accuracy.

\section{Conclusion} \label{sec:conclusions}

Our study proposes a method for decomposing $d$-band matrices in Pauli basis, with significant benefits for practical applications. By exclusively considering non-zero Pauli string candidates, our approach outperforms direct strategies based on the matrix multiplication and enables concurrent diagonalization of commuting subsets.

The proposed decomposition method for $d$-band matrices in Proposition \ref{theor:sets} also works for full $2^n \times 2^n$ matrices, when we set $d = 2^{n-1}$ in formulae \eqref{eq:unique_solutions}-\eqref{eq:unique_solutions_0}. Proposition \ref{theor:number_sets} and equation \eqref{eq:number_sets} yield the number of sets for the decomposition, providing a closed-form expression for calculating the number of Pauli strings that may play a role in the decomposition.

Using the one-dimensional wave equation as a model, we have carried out a number of numerical experiments. First, we investigated the implementation of Dirichlet boundary conditions in high-order accuracy schemes (Section \ref{subsec:boundary_cond}) and found that the boundary extension methods based on implicit assumption of function being zero outside of the integration interval are insufficient. Instead, anti-symmetric extension at the boundaries yields precise outcomes, which is essential for accurate simulations. This expansion needs to be directly integrated into the Hamiltonian so that it can be represented using weighted Pauli strings, which will contribute in the computation process.

Secondly, we confirmed that utilizing higher-order discretization improves the accuracy of solutions in quantum algorithms. This improvement is due to the reduction of errors in the approximation of derivatives by employing higher-order numerical schemes. This is shown in Figure \ref{fig:standing_comparison_error_qubits} for up to 8 qubits and the improvement is exponential in discretization order $\kappa$. 
Significantly, accuracy may be retained if decreased qubit count is accompanied by associated increase in discretization order.

Nevertheless, enhancing solution precision leads to greater gate complexity owing to a rise in Trotter steps, necessitating a balance between accuracy and computational resources.
Figure \ref{fig:fixed_trotter_error} demonstrates that a limitation in Trotterization precision consequently caps the overall accuracy of the solution, even if the accuracy of discretization allows for a more precise result. Thus, enhancing discretization precision necessitates a proportional rise in the number of Trotter steps to achieve improved solution accuracy.

Our results suggest that the precision of the solution is constrained by both the Trotterization algorithm and the discretization scheme, and these contributions operate almost independently (see Proposition \ref{theor:error_inter}). Per equation \eqref{eq:scaling_tr}, the Trotterization error shows a linear relation with the Hamiltonian norm, which varies slightly depending on the discretization scheme. Thus, the discretization scheme remains largely irrelevant to the Trotterization's precision, indicating that these two factors of overall accuracy function separately.

The findings reveal the necessity of accounting for both the numerical scheme order and Trotter error, which leads to a rise in gate complexity when executing quantum algorithms. 

However, by disentangling these components, it is possible to enhance the accuracy of solutions within specified error bounds and gate complexity constraints.

\section{Acknowledgements}
\textit{Competing interests.} The authors declare no competing interests. 
\textit{Author contributions.} All authors conceived and developed the theory and design of this study and verified the methods. All research is original work, authors acknowledge the use of generative AI for grammar and paraphrasing. 

The authors acknowledge the use of Skoltech’s Zhores supercomputer \cite{zhores} for obtaining the numerical results presented in this paper.

The authors express gratitude to Daniil Rabinovich, Ernesto Campos, Konstantin Antipin and Stepan Manukhov for suggestions and fruitful discussions.

The PYTHON code for the numerical experiment presented in all Figures is available on \href{https://github.com/barseniev/General-Matrix-Decomposition}{GitHub} \cite{github_page}.

\bibliography{bibliography}

\pagebreak

\appendix

\section{Proofs} \label{sec:appendix_proof}

\subsection{The Order of Binary Strings}
\label{sec:appendix_bin_order}
As explained earlier in Section \ref{sec:notation}, we adopt the standard MSB (Most Significant Bit) format for numbers represented by bit strings ($\MBinCut{n}{x}$ function). For example, $\MBinCut{8}{18}$ results in the bit string $00010010$.

This appendix utilizes the LSB notation, which transforms a number $b$ into a bit string of length $a$ using the operator $\BinCut{a}{b}$, ensuring the sequence begins with the least significant bit on the left. This creates a bit-reversed sequence, for instance, as compared to the earlier example: $\BinCut{8}{18}=01001000$.

The LSB convention simplifies the task of determining the remainder $r_k$ produced by adding two numbers in LSB format by following the bits from left to right.

\subsection{Proof of Proposition  \ref{theor:sets} for decomposition of general tridiagonal matrix}
\label{app: sec: theorem_general}
Two binary strings $\mathbf{a}=(a_1,a_2,\ldots,a_n,0)$ and $\mathbf{b}=(b_1,b_2,\ldots, b_n,0)$, corresponding to decimal numbers $a$ and $b$ can be added to obtain a binary string $\mathbf{c} = \mathbf{a} + \mathbf{b} = (c_1,c_2,\ldots,c_{n+1}), $ we append the appropriate number of zeroes to make sure that the strings we are adding have the same length. The following formula holds:
\begin{equation}
 c_k =(a_k+b_k+r_k(a,b))(\Mod{2}) = a_k \oplus b_k \oplus r_k(a,b).
\end{equation}
Here, the carry-over term $ r_1(a,b)=0, r_2(a,b)=a_1b_1$ and 

 $r_{k}(a,b)=$
\begin{align*}
     &= \left( \sum^{k-2}_{m=1} a_{m}b_{m}\left( \prod ^{k-1}_{j=m+1}\left( a_{j} + b_{j}\right) \right) + a_{k-1}b_{k-1} \right) \Mod{2} \\
    & = \bigoplus^{k-2}_{m=1} a_{m}b_{m}\left( \prod ^{k-1}_{j=m+1}\left( a_{j} \oplus b_{j}\right) \right) \oplus a_{k-1}b_{k-1}, \quad k>2.
    \label{eq:carry-over-term}
\end{align*}

In paper \cite{tridiagonal} we had the following Proposition in the Appendix:

\begin{restatable}{proposition}{lemmaCondition}
\label{lem: lem 1}
    Let $B$ be an upper $d$-diagonal matrix. If a Pauli string $P$ enters the Pauli string decomposition of matrix $B\in  \mathbb{C}^{2^n \cross 2^n}$ non-trivially then $\exists p \in \{0, \dots, 2^{n}-1-d\}$:
    \begin{equation}\label{eq: band}
        \mathbf{p}+\mathbf{d} = \mathbf{p}\oplus \mathbf{x} ,
    \end{equation}
    where $\mathbf{x}$ and $\mathbf{z}$ are such that $P = \hat{W}(\mathbf{x},\mathbf{z})$. 
\end{restatable}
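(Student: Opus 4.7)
The plan is to prove this as a straightforward consequence of the coefficient formula \eqref{eq:calc_weight} combined with the sparsity structure of an upper $d$-diagonal matrix. The Pauli string $P = \hat{W}(\mathbf{x},\mathbf{z})$ enters the decomposition non-trivially precisely when $\beta_{\mathbf{x},\mathbf{z}} \neq 0$, so my starting point is
\begin{equation*}
\beta_{\mathbf{x},\mathbf{z}} = \sum_{\mathbf{p}\in\mathbb{B}^n} \imath^{\mathbf{x}\cdot\mathbf{z}}(-1)^{\mathbf{z}\cdot\mathbf{p}}\, b_{\mathbf{p},\overline{\mathbf{p}}^{\mathbf{x}}}.
\end{equation*}
Non-vanishing of this sum forces at least one matrix element $b_{\mathbf{p},\overline{\mathbf{p}}^{\mathbf{x}}}$ to be non-zero, which is the single fact I will exploit.

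First I would unpack the notation $\overline{\mathbf{p}}^{\mathbf{x}}$. Using the bitwise rule from Table \ref{tab:definitions_bit_strings}, namely $a^{b} = a \oplus b \oplus 1$, a bit-by-bit check gives $(\overline{p})^{x} = \overline{p}\oplus\overline{x} = p\oplus x$, so componentwise $\overline{\mathbf{p}}^{\mathbf{x}} = \mathbf{p}\oplus\mathbf{x}$. Therefore the requirement becomes: there exists $\mathbf{p}\in\mathbb{B}^n$ with $b_{\mathbf{p},\,\mathbf{p}\oplus\mathbf{x}}\neq 0$, where the two bit strings are interpreted as row and column indices of $B$.

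Next I invoke the structural assumption: $B$ is an upper $d$-diagonal matrix, so its only non-zero entries are those with column index equal to row index plus $d$, and the admissible row indices are exactly $p\in\{0,\dots,2^{n}-1-d\}$. Writing the column index in binary as $\Bin{p+d} = \mathbf{p}+\mathbf{d}$ (ordinary binary addition of the $n$-bit strings, which fits in $n$ bits because $p\leq 2^n-1-d$), the non-vanishing condition translates to $\mathbf{p}\oplus\mathbf{x} = \mathbf{p}+\mathbf{d}$, which is exactly equation \eqref{eq: band}. Thus the existence of such $p$ is established, completing the proof.

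I do not anticipate a real obstacle: the argument is essentially an unpacking of definitions together with the observation that the sparsity pattern of $B$ restricts which $(\mathbf{p},\overline{\mathbf{p}}^{\mathbf{x}})$ pairs can contribute. The only subtlety worth handling carefully is the identity $\overline{\mathbf{p}}^{\mathbf{x}} = \mathbf{p}\oplus\mathbf{x}$, since the superscript notation is non-standard; once that is in hand the equivalence between "column equals row plus $d$ as integers" and "$\mathbf{p}+\mathbf{d}=\mathbf{p}\oplus\mathbf{x}$ as bit strings" is immediate because the integer $p+d \leq 2^n-1$ has a unique $n$-bit binary representation.
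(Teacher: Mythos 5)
Your proof is correct and follows essentially the same route the paper relies on: it takes the coefficient formula \eqref{eq:calc_weight} as given, observes that a non-zero $\beta_{\mathbf{x},\mathbf{z}}$ forces some $b_{\mathbf{p},\overline{\mathbf{p}}^{\mathbf{x}}}=b_{\mathbf{p},\mathbf{p}\oplus\mathbf{x}}$ to be non-zero, and then uses the sparsity of the single upper $d$-diagonal to conclude $\mathbf{p}+\mathbf{d}=\mathbf{p}\oplus\mathbf{x}$ for some admissible $p$. The paper itself does not re-prove this proposition but imports it from \cite{tridiagonal}; your reconstruction, including the careful check that $\overline{\mathbf{p}}^{\mathbf{x}}=\mathbf{p}\oplus\mathbf{x}$, matches the reasoning the paper sketches around equations \eqref{eq:calc_weight} and \eqref{eq:main_band}.
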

\begin{remark}
 As described in the Appendix \ref{sec:appendix_bin_order}  when we input  $\mathbf{z}$ and $\mathbf{x}$ into the Walsh operator we reverse the binary strings prior to evaluation.
\end{remark}

Writing \eqref{eq: band} explicitly for $k$-th bit, we obtain:
\begin{equation*}
p_{k}\oplus d_{k}\oplus r_{k}(p,d)= p_{k}\oplus x_{k}\\ \Rightarrow x_{k}=d_k \oplus r_{k}(p,d).
\end{equation*}
Therefore, $x_1 = d_1$, $x_2 = d_2 \oplus p_{1}d_{1}$, and for $k>2$ we obtain
\begin{align}
x_{k}=d_{k}\oplus \bigoplus ^{k-2}_{m=1}p_{m}d_{m}\left( \prod ^{k-1}_{j=m+1}\left( p_{j} \oplus d_{j}\right)\right) \oplus p_{k-1}d_{k-1}.
\label{sols}
\end{align}
From this formula one can derive the recurrent relation for $k > 1$:
\begin{equation}
x_{k+1} \oplus d_{k+1} = (d_{k} \oplus x_{k})(p_{k} \oplus d_{k}) \oplus p_{k} d_{k}.
\label{eq:rec_rel}
\end{equation}

Let $l=\lceil \textit{$log_2$}d \rceil$ and $$P_n^m= \underbrace{(\overbrace{1,1, \ldots 1}^m, 0, \ldots, 0)}_n, \quad m=\overline{0,n}.$$

Let $S_l \in \mathbb{B}^n$ a binary variable string of length l.

Let us prove the following three Lemmas:

\begin{lemma}\label{lemma_all}
    All solutions $\mathbf{x}$ of \eqref{eq: band} have the form
    \begin{equation}
        \mathbf{x} = S_l*P_{n-l}^m, \quad m=\overline{0,n-l}.
    \end{equation} 
    
\end{lemma}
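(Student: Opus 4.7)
The plan is to exploit the recurrence relation \eqref{eq:rec_rel} derived just above the lemma, together with the fact that $d$ has only $l = \lceil \log_2 d \rceil$ significant bits in its LSB representation, so that $d_k = 0$ for every $k > l$. Substituting $d_k = 0 = d_{k+1}$ into the recurrence for $k \ge l+1$ collapses it to the much simpler rule $x_{k+1} = x_k \cdot p_k$, which is the workhorse of the whole argument.

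From this simplified relation I would argue by a short induction that once $x_j = 0$ for some index $j \ge l+1$, every subsequent bit $x_{j'}$ with $j' > j$ must also vanish, since the factor $x_j = 0$ propagates the zero forward. Consequently, the suffix $(x_{l+1}, x_{l+2}, \ldots, x_n)$ is forced to be a (possibly empty) run of $1$'s followed by a (possibly empty) run of $0$'s — which is precisely the pattern $P_{n-l}^m$ for the unique $m \in \{0, 1, \ldots, n-l\}$ equal to the length of the leading run of $1$'s. The first $l$ bits $(x_1, \ldots, x_l)$ are not further constrained by this tail analysis, so they constitute some element $S_l \in \mathbb{B}^l$, and concatenation yields the claimed form $\mathbf{x} = S_l \ast P_{n-l}^m$.

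The main delicacy I foresee is pinning down the exact index at which the recurrence simplifies. Because both $d_k$ and $d_{k+1}$ must vanish for the collapse to occur, and $d_l$ is by definition the leading (nonzero) bit, the clean recurrence $x_{k+1} = x_k p_k$ only holds from $k = l+1$ onward. The single transitional step computing $x_{l+1}$ from $x_l$ still involves $d_l = 1$ and must be handled separately; it merely dictates what $x_{l+1}$ equals given $p_l$ and $x_l$, without breaking the subsequent clean structure. Once that boundary case is verified, the inductive propagation and the final assembly of $\mathbf{x}$ as $S_l \ast P_{n-l}^m$ are straightforward.
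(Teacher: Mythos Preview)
Your approach matches the paper's: collapse the recurrence \eqref{eq:rec_rel} to $x_{k+1} = x_k p_k$ once the bits of $d$ vanish, then observe that this forces the tail $(x_{l+1},\ldots,x_n)$ to be a run of $1$'s followed by $0$'s. However, your claim that $d_k = 0$ for every $k > l = \lceil \log_2 d \rceil$ is false when $d$ is a power of two. If $d = 2^t$ then $l = t$, but the sole nonzero bit of $d$ sits at position $t+1 = l+1$ in the LSB convention, so $d_{l+1} = 1$; your collapsed recurrence therefore does not hold at $k = l+1$, and your remark that ``$d_l$ is by definition the leading (nonzero) bit'' is likewise wrong in this case (here $d_l = 0$). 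You have conflated $\lceil \log_2 d \rceil$ with $\BinL{d}$, which differ by one exactly when $d$ is a power of two.

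The paper resolves this with an explicit case split. For $d \neq 2^t$ it argues exactly as you do. For $d = 2^t$ it first reads off from the recurrence that $x_1 = \cdots = x_l = 0$ and $x_{l+1} = 1$ are \emph{forced}, after which the simplified product formula applies from index $l+1$ onward and yields the same $P_{n-l}^m$ tail (now necessarily with $m \geq 1$). The fix is local and your overall plan is sound; you just need to handle the power-of-two boundary separately.
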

\begin{proof}

If $d\neq2^t$ for any $t$ then $l=\BinL{d}.$ Therefore, $d_{l+1},\ldots,d_n=0.$ Hence, substituting $k=l+1$ in \eqref{eq:rec_rel} we get $x_{l+2}= x_{l+1}p_{l+1}$.

Similarly, substituting $k = \overline{l+2,n}$ in \eqref{eq:rec_rel} we obtain,
    \begin{equation}
    x_k= \left( \prod ^{k-1}_{j=l+1}p_j \right)x_{l+1}, \quad k=\overline{l+2,n}.
    \label{eq:Pyr}
    \end{equation}
Since, $p_j$ can be chosen arbitrarily we get $x= S_l*P_{n-l}^m$ where $m$ since, if some $x$ is zero all the following $x$'s are zeros as well.

If $d=2^t$ for some $t$ from \eqref{eq:rec_rel} it follows that $x_j=0$ for $j=\overline{0,l}$, $x_{l+1}=1.$ Further from \eqref{eq:Pyr} it follows that $(x_{l+2},x_{l+3},\ldots,x_{n})= P_{n-l-1}^m$ for certain $m$.

Thus, in either case the solution has the form $x= S_l*P_{n-l}^m$.
\end{proof}

 For a fixed $l$ we call a solution $\mathbf{x}$ of \eqref{eq: band} \textbf{novel} if it is not a solution of \eqref{eq: band} for $l'<l.$

\begin{lemma}\label{lemma_new}
    For fixed $l$, all novel solutions $\mathbf{x}$ of \eqref{eq: band} have the form
    \begin{equation}
        \mathbf{x} = S_l*P_{n-l}^m,\quad m=\overline{1,n-l}.
    \end{equation}
\end{lemma}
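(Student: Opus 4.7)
The plan is to deduce Lemma \ref{lemma_new} from Lemma \ref{lemma_all} by ruling out exactly one value of $m$. Since every novel solution is in particular a solution of \eqref{eq: band}, Lemma \ref{lemma_all} already guarantees that any novel $\mathbf{x}$ has the form $S_l * P_{n-l}^m$ with $m \in \{0,1,\ldots,n-l\}$. It therefore remains only to exclude $m=0$, and this is where I would concentrate the argument.

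To that end, fix an $\mathbf{x} = S_l * P_{n-l}^0 = (s_1,\ldots,s_l,0,\ldots,0)$. I would show that such an $\mathbf{x}$ is already realized as a solution of \eqref{eq: band} at level $l' = l-1$, contradicting novelty. Split on the last bit $s_l$. If $s_l = 0$, regroup the bits as $\mathbf{x} = (s_1,\ldots,s_{l-1}) * P_{n-l+1}^0$, which fits the $l' = l-1$ shape of Lemma \ref{lemma_all} with $m' = 0$. If $s_l = 1$, write instead $\mathbf{x} = (s_1,\ldots,s_{l-1}) * P_{n-l+1}^1$, which fits the $l' = l-1$ shape with $m' = 1$. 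In either case $\mathbf{x}$ arises at a strictly smaller level, so it is not novel, and hence every novel solution must have $m \geq 1$.

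The subtle step I would need to handle carefully is the converse direction implicit in Lemma \ref{lemma_all}: I am using that every string of the shape $S_{l'} * P_{n-l'}^{m'}$ is actually attained by \eqref{eq: band} for that $l'$ and some admissible pair $(p,d)$. This follows from the recurrence \eqref{eq:rec_rel} used in the proof of Lemma \ref{lemma_all}, which shows that the prefix bits $s_1,\ldots,s_{l'}$ and the pyramid length $m'$ can be prescribed independently through appropriate choices of the carry-producing bits of $p$; I would record this as a short preliminary remark before invoking it. The base case $l=1$ (so $l'=0$, with $S_0$ empty and the pyramid spanning the whole string) is covered by the same dichotomy on $s_l$, so no separate argument is needed.
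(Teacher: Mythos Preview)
Your plan is exactly the paper's: invoke Lemma~\ref{lemma_all} to restrict to the shapes $S_l*P_{n-l}^m$, then argue that an $m=0$ string already occurs at a smaller level, contradicting novelty. The difference is in how you discharge the $m=0$ case. The paper proceeds by induction on $l$: if $s_l=1$ the string rewrites as $S_{l-1}*P_{n-l+1}^{1}$; if $s_l=0$ one inspects $s_{l-1}$ and repeats, eventually hitting a $1$ (yielding an $m'=1$ shape at some level strictly below $l$) or the zero string (the diagonal case). You instead try to stop after a single step, by asserting a converse to Lemma~\ref{lemma_all}: that \emph{every} string of shape $S_{l'}*P_{n-l'}^{m'}$ is attained as a solution at level~$l'$.

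That converse fails precisely when $m'=0$, which is exactly the shape you land on in your subcase $s_l=0$. For instance, $\mathbf{x}=(1,0,\dots,0)$ has the shape $S_1*P_{n-1}^{0}$ but is not a solution at level $l'=1$, since $p+2=p\oplus 1$ has no solution $p$. So in the $s_l=0$ branch your argument stalls at the $m'=0$ shape one level down---the very case you are trying to exclude---and you must iterate down the prefix bits as the paper does. Your sketch for why the converse holds (``prefix bits and pyramid length can be prescribed independently through choices of the carry bits of $p$'') is also not accurate as stated: already $x_1=d_1$ is fixed by $d$, not by $p$, so varying $p$ alone does not realize arbitrary prefixes at a fixed $d$.
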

\begin{proof}
From Lemma $\ref{lemma_all}$ it follows that all solutions of \eqref{eq: band} for a given $l$ are either $\mathbf{x}=S_l*P_{n-l}^0$ or  $\mathbf{x} = S_l*P_{n-l}^m,\quad m=\overline{1,n-l}.$

We will prove that all the solutions have the form $\mathbf{x} = S_l*P_{n-l}^m,\quad m=\overline{1,n-l}.$

It holds for $l=0$ as shown in \cite{tridiagonal}.

Assume it to be true for a certain $l-1$, let us prove that it is also true for $l$.

Assume the contrary i.e. let for a fixed $l$ there exist a solution $\mathbf{x}$ that satisfies equation \eqref{eq: band} and has the form $\mathbf{x}=S_l*P^0_{n-l}$. Let us check the $l$-th coordinate of $\mathbf{x}$. If it is equal to 1, the solution can be rewritten in the form $S_{l-1}*P^1_{n-l+1}$ which has appeared before according to our assumption. Otherwise, consider the $l-1$-th coordinate of $\mathbf{x}$ and use the same argument. If 1 never appears in the string we get the solution $\mathbf{x}=\vec{0}$ which is considered separately (diagonal case).

Thus, all the solutions have the form $\mathbf{x}=S_l*P_{n-l}^m, m=\overline{1,n-l}$.
\end{proof}

\begin{lemma}\label{lemma_new_0}
    For fixed $l>0$, all novel solutions $\mathbf{x}$ of \eqref{eq: band}  have the form
    \begin{equation}
        \mathbf{x}= S_{l-1}*0*P_{n-l}^m, \quad m=\overline{1,n-l}.
    \end{equation}
\end{lemma}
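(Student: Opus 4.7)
The plan is to build directly on Lemma \ref{lemma_new}, which has already established that every novel solution at level $l$ has the form $\mathbf{x} = S_l * P_{n-l}^m$ with $m \in \overline{1,n-l}$. The remaining task is to show that the $l$-th bit of the arbitrary prefix $S_l$ must equal $0$, which immediately refines the form to $\mathbf{x} = S_{l-1} * 0 * P_{n-l}^m$ and yields the statement.

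First I would split into two cases based on the value $x_l$ of the $l$-th bit. If $x_l = 0$, there is nothing to do. The nontrivial case is $x_l = 1$. Then position $l$ holds a $1$ that sits immediately before the block of $m$ consecutive $1$'s coming from $P_{n-l}^m$, so the whole run of $1$'s in $\mathbf{x}$ begins at position $l$ and has length $m+1$. Consequently, $\mathbf{x}$ can be rewritten as $\mathbf{x} = S_{l-1} * P_{n-l+1}^{m+1}$, where the free prefix now has length only $l-1$ and the new pyramid exponent $m+1$ lies in $\overline{2, n-l+1}$, a subset of the admissible range $\overline{1, n-(l-1)}$.

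The next step is to invoke Lemma \ref{lemma_all} at level $l' = l-1$. The recursion $x_k = \left(\prod_{j=l'+1}^{k-1} p_j\right) x_{l'+1}$ used inside its proof constructs, for any prescribed prefix $S_{l-1}$ and any exponent $m' \in \overline{0,n-l+1}$, an explicit choice of carry bits $p_j$ that realizes $S_{l-1} * P_{n-l+1}^{m'}$ as a genuine solution of \eqref{eq: band} for some $d'$ with $\lceil \log_2 d' \rceil = l-1$. Applying this with $m' = m+1$ shows that our $\mathbf{x}$ is already a solution at level $l-1$, contradicting its alleged novelty at level $l$. Hence $x_l = 0$ must hold, and the form $\mathbf{x} = S_{l-1} * 0 * P_{n-l}^m$ is forced.

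The main obstacle is that Lemma \ref{lemma_all} is phrased as a necessary condition on the shape of solutions, whereas here I need to use it in the sufficient direction, i.e., that every admissible triple $(l', S_{l'}, m')$ is actually attained by some $(p, d')$. Its constructive proof does deliver this, but care is required to extract the existence statement cleanly; once that is done, the two-case split above closes the argument without further calculation.
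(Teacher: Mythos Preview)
Your approach has a genuine gap at the step where you invoke Lemma~\ref{lemma_all} in the sufficient direction. You assert that its proof ``constructs, for any prescribed prefix $S_{l-1}$ and any exponent $m'$, an explicit choice of carry bits $p_j$ that realizes $S_{l-1} * P_{n-l+1}^{m'}$ as a genuine solution of \eqref{eq: band} for some $d'$ with $\lceil \log_2 d' \rceil = l-1$.'' This is false. The recursion $x_k = \bigl(\prod_{j=l'+1}^{k-1} p_j\bigr)\, x_{l'+1}$ appearing in the proof of Lemma~\ref{lemma_all} governs only the \emph{tail} bits $x_{l'+2},\dots,x_n$; the prefix $S_{l'}$ is fixed by $d'$ together with $p_1,\dots,p_{l'}$ through the separate relations $x_k = d_k \oplus r_k(p,d)$, and these do \emph{not} let you prescribe the prefix freely at a given level. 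In fact only half of the $2^{l'}$ possible prefixes occur at level $l'$ (this is exactly the counting that Proposition~\ref{thm:main} later exploits). Concretely, for $n=3$ and $l=2$ the string $\mathbf{x}=(1,1,0)$ has $x_2=1$, and your argument would place it at level $l-1=1$; but the only solutions for $d'=2$ are $(0,1,0)$ and $(0,1,1)$, while $(1,1,0)$ is a solution for $d=1$ and $d=3$ only. So the string is indeed non-novel at level $2$, but not via level $1$.

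The paper sidesteps this by arguing inductively and descending through the bits rather than jumping straight to level $l-1$. If $x_l=1$ one examines $x_{l-1}$; if that too equals $1$ one continues to $x_{l-2}$, and so on. The descent terminates either when a $0$ is found at some position $j$, in which case $\mathbf{x}=S_{j-1}*0*P_{n-j}^{m''}$ matches the inductive hypothesis at the lower level $j$, or when $x_1=\cdots=x_l=1$, in which case $\mathbf{x}=P_n^{l+m}$ is explicitly a solution for $d=1$. The structural difference is that the paper's induction is anchored in the concrete base case $d=1$, whereas your one-shot appeal to level $l-1$ can land on a level where the required prefix simply does not occur.
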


\begin{proof}
    Similar to the above Lemma \ref{lemma_new} assume that the assumption is true for $l-1$. For $l$ pick a solution $\mathbf{x}$ that satisfies equation \eqref{eq: band} and has the form $\mathbf{x}=S_{l-1}*1*P_{n-l}^m.$ Now check the $(l-1)$-th coordinate of $\mathbf{x}$. If it is equal to 0, the solution can be rewritten in the form $S_{l-2}*0*P^m_{n-l+1}$ which has appeared before according to our assumption. Otherwise, consider the $(l-1)$-th coordinate of $\mathbf{x}$ and use the same argument. If $0$ never appears in the string we get a solution which corresponds to the case $d=1 ~ (l=0)$ and hence has appeared before and is not new.

    Thus, all novel solutions have the form 
    $\mathbf{x}~=~S_{l-1}~*~0~*~P_{n-l-1}^m$.
\end{proof}

We will now prove the subsequent Proposition, leading to Proposition \ref{theor:sets}.

\begin{proposition}
\label{thm:main}
For given $d$ and $n$, each novel solution $\mathbf{x}$ of equation \eqref{eq: band} is encoded by a bit string 
    \begin{equation}\label{eq: new_solutions}
        \mathbf{x} = \Bin{D-d}*P^m_{n-l}, \quad m=\overline{0,n-l},
    \end{equation}
   
    where $D=2^\BinL{d}=2^l$.
\end{proposition}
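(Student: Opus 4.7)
My plan is to combine the three preceding lemmas with a minimality argument on the integer value of $\mathbf{x}$. Lemmas \ref{lemma_all}--\ref{lemma_new_0} already reduce every novel solution to the form $\mathbf{x} = S_l * P^m_{n-l}$, with the additional constraint that the bit at position $l$ is $0$ when $d$ is not a power of two; what remains is to identify the integer encoded by the $l$-bit prefix $S_l$, and the claim is that it must equal $D-d$.

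The central algebraic tool is to recast the bitwise relation \eqref{eq: band} as the integer identity
\begin{equation*}
    x = d + 2(p \wedge x),
\end{equation*}
obtained from $p \oplus x = p + x - 2(p \wedge x)$. This forces $u := (x-d)/2$ to be a submask of the 1-bits of $x$, and conversely every submask with $2u < x$ produces a valid $p$. Consequently the set of $d'$ for which a fixed $\mathbf{x}$ satisfies \eqref{eq: band} equals $\{x - 2u\}$ as $u$ ranges over submasks of $x$, and its smallest positive element is $d_{\min}(\mathbf{x}) = 2^t - x$, where $t = \BinL{x}$ is the position of the top 1-bit of $\mathbf{x}$ (achieved by taking $u$ to be all of $\mathbf{x}$'s 1-bits except the topmost).

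First I would observe that, under the structural form $\mathbf{x} = S_l * P^m_{n-l}$, the top 1-bit of $\mathbf{x}$ sits at position $t = l + m$: for $m \geq 1$ it is the last $1$ inside $P^m_{n-l}$, and the edge case $m = 0$ reduces to the $d = 2^{l-1}$ branch of Lemma \ref{lemma_all}. Next, novelty combined with $d$ being the smallest-$\BinL{}$ value realising $\mathbf{x}$ forces $d = d_{\min}(\mathbf{x})$, which yields the integer identity $x = 2^{l+m} - d$. Decomposing
\begin{equation*}
    x = (2^l - d) + 2^l(2^m - 1) = (D-d) + 2^l(2^m - 1),
\end{equation*}
and using $0 \leq D - d < 2^l$, the LSB expansion of $x$ is exactly $\Bin{D-d}$ in positions $1,\ldots,l$ followed by $m$ consecutive ones in positions $l+1,\ldots,l+m$ and zeros thereafter, which is $\Bin{D-d} * P^m_{n-l}$ as required.

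The main obstacle I anticipate is reconciling the two indexing conventions in play: the three lemmas use $l = \lceil \log_2 d \rceil$ whereas the Proposition uses $l = \BinL{d}$, and these disagree by one exactly when $d$ is a power of two. For that boundary case I would invoke the $d = 2^t$ branch in the proof of Lemma \ref{lemma_all} directly, which outputs $\mathbf{x} = (0,\ldots,0,1) * P^m_{n-t-1}$; this matches $\Bin{D-d} * P^m_{n-l}$ since $D-d = 2^{l-1}$ whose LSB expansion of length $l = t+1$ is exactly $(0,\ldots,0,1)$, and then verify that the suffix index $m$ reindexes cleanly into the Proposition's unified range $\{0,\ldots,n-l\}$.
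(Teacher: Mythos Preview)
Your route via the integer identity $x = d + 2(p\wedge x)$ and the formula $d_{\min}(\mathbf{x}) = 2^{t} - x$ is genuinely different from the paper's argument. The paper proceeds by construction plus counting: for each $d$ at level $l$ it writes down the $n-l$ strings $\BinCut{l-1}{D-d}*0*P^m_{n-l}$, exhibits an explicit witness $p = 2^jD - d$ and computes the bits $x_k$ via floor/ceiling identities to verify each string is a solution for that $d$, and then invokes pigeonhole---there are $2^{l-1}$ values of $d$ at level $l$ and exactly $2^{l-1}$ possible prefixes $S_{l-1}$ from Lemma~\ref{lemma_new_0}---to conclude that these constructed strings exhaust all novel solutions. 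Your submask analysis is cleaner in that it identifies the encoding $d$ for a given $\mathbf{x}$ directly, bypassing both the explicit-$p$ check and the bitwise carry computation.

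However, the step ``novelty combined with $d$ being the smallest-$\BinL{}$ value realising $\mathbf{x}$ forces $d = d_{\min}(\mathbf{x})$'' does not hold as stated, because several $d$ at the \emph{same} level can realise the same $\mathbf{x}$. Take $n=4$ and $\mathbf{x}=(1,1,0,1)$ in LSB (so $x=11$): your own submask description gives $d\in\{5,7,9,11\}$, and both $d=5$ and $d=7$ have $\BinL{}=3$, so $\mathbf{x}$ is novel at that level as a solution for either. Yet your decomposition $x=(D-d)+2^{l}(2^{m}-1)$ holds only for $d=5=d_{\min}(\mathbf{x})$; for $d=7$ the claimed form would be $\BinCut{3}{1}*P^{m}_{1}\in\{1000,1001\}$, which misses $1101$. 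What your argument actually establishes is the inverse map: every novel $\mathbf{x}$ equals $\Bin{D-d_{\min}(\mathbf{x})}*P^m_{n-l}$ for a unique $m$. Combined with the easy observation that distinct pairs $(d,m)$ produce distinct strings (since $0< D-d\leq 2^{l-1}$ determines the prefix), this yields the same bijection the paper gets by counting---but you should frame the conclusion as constructing that bijection, not as a claim about all novel solutions for a fixed $d$.
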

\begin{proof}
 First, we calculate the total possible number of novel solutions. Second, we construct strings that satisfy the equation and show that the number of them is the same as the number of possible strings, therefore we provide all solutions. 

From Lemma \ref{lemma_new_0}, it is established that for a given $l$, all possible novel solutions take the form 
\begin{equation*}
    \mathbf{x}= S_{l-1}*0*P_{n-l}^m, \quad m=\overline{1,n-l},
\end{equation*}
where $S_{l-1} \in \mathbb{B}^{l-1}$ and there are $2^{l-1}$  possible configurations of the string $S_{l-1}$.

It can be seen that for a fixed value $l > 0$ (for a fixed binary length) there are $2^{l-1}$ possible strings $d$.

That number matches the number of possible strings $S_{l-1}$. We show that each $d$ produces novel solutions (up to $d = 2^{n-1} $) and claim that each $d$ uniquely corresponds to a distinct string $S_{l-1}$. Consequently, for a given $d$, all novel solutions are described by the following $n-l$ binary strings:
\begin{equation}
    \mathbf{x} = S_{l-1}(d)*0*P_{n-l}^m, \quad m=\overline{1,n-l},
    \label{eq:sol_fixed_d}
\end{equation}
where $S_{l-1}(d)$ is fixed for each specified $d$.

If $S_{l-1}(d)$ were arbitrary then the total number of possible strings would be $2^{l-1}*1*(n-l)=2^{l-1}*(n-l).$

Let us consider the strings 
\begin{equation*}
    \mathbf{x} = \BinCut{l-1}{D-d}*0*P^m_{n-l}, \quad m=\overline{0,n-l}.
\end{equation*}
We show that $S_{l-1}$ from \eqref{eq:sol_fixed_d} is $ \BinCut{l-1}{D-d}$.
For a given $l$ and $D=2^l$ the value $D-d$ is unique. Thus, for a given $d$ we can generate   $n-l$ unique solutions. Thus, for each of the $2^l-1$ $d'$s we obtain all the $2^{l-1}*(n-l)$ possible solutions.

To complete the proof, it remains to demonstrate that, for a given $d$, the strings are indeed solutions to the equation $\Bin{p+d} = \Bin{p} \oplus \mathbf{x}$. Once we establish that these strings are solutions, it follows from equation \eqref{eq:sol_fixed_d} that these constitute all the novel solutions for the given $d$.

There are two possible cases, when $d$ is some power of two or not.
The case $d=2^t$ has been considered in Lemma \ref{lemma_all}.
Consider the case when $d \neq 2^t$ for all $ t \in \mathbb{Z}$, thus $l=\BinL{d}$.
From \eqref{eq: band} we have $$\mathbf{x}=(\mathbf{p}+\mathbf{d}) \oplus \mathbf{p}.$$
Let's chose $p=D-d$ where $D=2^\BinL{d}=2^l$.

We use the following equality for binary addition rule:
\begin{equation*}
    (\mathbf{a}\oplus\mathbf{b})_k = \left\lfloor\frac{a}{2^{k-1}}\right\rfloor\! \bmod{2} \oplus \left\lfloor\frac{b}{2^{k-1}}\right\rfloor\! \bmod{2}, \quad \forall a, b \in \mathbb{B}.
\end{equation*}

Thus, the $k$-th coordinate of $\mathbf{x}$ is equal to 
\begin{equation*}
    x_k = \left( \left\lfloor \frac{D}{2^{k-1}} \right\rfloor \bmod 2 \right)  \oplus \left( \left\lfloor\frac{D-d}{2^{k-1}} \right\rfloor\bmod 2 \right).
\end{equation*}
As $D$ is a some power of two, then
$$
\left\lfloor \frac{D}{2^{k-1}} \right\rfloor \bmod 2 = 0
$$
and  
$$
x_k =  \left\lfloor\frac{-d}{2^{k-1}} \right\rfloor\bmod 2.
$$
Using the properties of the floor and ceiling function 
\begin{equation*}
     \; \lfloor -a \rfloor \bmod{2} =  \lceil a \rceil \bmod{2}, \quad \forall a \in \mathbb{B}
\end{equation*}

we obtain 
\begin{equation}
    x_k = \left\lceil\frac{d}{2^{k-1}}\right\rceil \bmod 2, 
\end{equation}
which is $0$ for $l-1<k$:
\begin{equation}
    x_k=\left\{\begin{array}{cl}\left\lceil\frac{d}{2^{k-1}}\right\rceil \bmod 2, & \text { if } k \leqslant l+1, \\ 0, & \text { if } k>l+1.\end{array}\right.
\end{equation}

It also can be seen that since $\BinL{d}=l$,
 $x_l=0$ and $x_{l+1}=1$.
In a similar way, by substituting $p=2^jD-d$ we obtain rest of the solutions of the type $\Bin{D-d}*P$. 
  
\end{proof}

Writing out $\Bin{D-d}$ and $P$ explicitly, in terms of $\MBin{D-d}$ we have:

\decompositionSets*

\subsection{Proof of Proposition  \ref{theor:number_sets} for decomposition of the general tridiagonal matrix}

\decompositionNumSets*
\begin{proof}
According to Proposition \ref{theor:sets}, the total number of sets for the $d$ band matrix $B$ of size $2^n \times 2^n$, including the diagonal, is given by the formula:
\begin{equation}
    s = 1 + \sum_{k=1}^d (n - \lceil \log_2 k \rceil)
\end{equation}

To compute this sum explicitly, we first note that the term $\lceil \log_2 k \rceil$ repeats exactly $2^{\lceil \log_2 k \rceil-1}$ times. Therefore, we can express the sum as:
\begin{equation}
    \sum_{k=1}^d (\lceil \log_2 k \rceil) = \sum_{r=1}^{\lfloor \log_2 d \rfloor} r 2^{r-1} + \lceil \log_2 d \rceil (d- 2^{\lfloor \log_2 d \rfloor}),
\end{equation}

where the first term accounts for sum from $k = 1$ to $2^{\lfloor \log_2 d \rfloor}$, and the second term accounts for the remaining terms, i.e. for sum from $k = 2^{\lfloor \log_2 d \rfloor} + 1$ to $d$. Next, we compute the first sum explicitly:
\begin{equation}
    \sum_{r=1}^{\lfloor \log_2 d \rfloor} r 2^{r-1} = 2^{\lfloor \log_2 d \rfloor}( \lfloor \log_2 d \rfloor - 1) + 1
\end{equation}

Combining this together, we get 
\begin{equation}
    s = 2^{\lfloor \log_2 d \rfloor}( \lceil \log_2 d \rceil + 1 - \lfloor \log_2 d \rfloor) + d(n - \lceil \log_2 d \rceil).
\end{equation}

It is straightforward to verify that in both cases, whether $d$ is a power of $2$ or not, we obtain the formula presented in the Proposition, i.e.
\begin{align}
\begin{split}   
    s &= 2^{\lfloor \log_2 d \rfloor + 1} + d(n - \lceil \log_2 d \rceil - 1) \\
      &= 2^{\BinL{d}} + d(n - \BinL{d}).
\end{split}
\end{align}

\end{proof}

\subsection{Proof of Proposition  \ref{theor:error_inter} for error impact}

\errorInter*
\begin{proof}

Let us define a new operator 
\begin{equation}
    H_{d/dx} = 
    \begin{pmatrix}
        0 & \frac{d}{dx}(c(x) \cdot) \\
        -c(x)\frac{d}{dx} & 0
    \end{pmatrix},
\end{equation}
where operator $\frac{d}{dx}(c(x) \cdot)$ acts on function $f(x)$ as $\frac{d}{dx}(c(x) \cdot)f(x)= \frac{d}{dx}(c(x) f(x))$.
The action of $H_{d/dx}$ on a vector function may be written as
\begin{equation*}
\begin{split}
    H_{d/dx} 
    \begin{pmatrix}
        f_1(x) \\
        f_2(x)
    \end{pmatrix}
    & =
    \begin{pmatrix}
        \frac{d}{dx}(c(x) f_2(x)) \\
        -c(x)f'_1(x)
    \end{pmatrix} \\
    & =
    \begin{pmatrix}
        c'(x) f_2(x) + c(x) f'_2(x)  \\
        -c(x)f'_1(x)
    \end{pmatrix}.
\end{split}
\end{equation*}

The same operator of bigger size can be defined as 

\begin{equation*}
    H_{d/dx} = 
    \begin{pmatrix}
        0 & 0 & \frac{d}{dx}(c(x) \cdot) & 0\\
        0 & 0 & 0 & \frac{d}{dx}(c(x) \cdot)\\
        -c(x)\frac{d}{dx} & 0 & 0 & 0\\
        0 & -c(x)\frac{d}{dx} & 0 & 0
    \end{pmatrix},
\end{equation*}
which acts on a vector function, as
\begin{equation*}
    H_{d/dx} 
    \begin{pmatrix}
        \vec{f}_1 \\
        \vec{f}_2
    \end{pmatrix}
    = 
    H_{d/dx} 
    \begin{pmatrix}
        f_1(x_1) \\
        f_1(x_2) \\
        f_2(x_1) \\
        f_2(x_2)
    \end{pmatrix}
    =
    \begin{pmatrix}
        \frac{d}{dx}(c(x_1) f_2(x_1)) \\
        \frac{d}{dx}(c(x_2) f_2(x_2)) \\
        -c(x_1)f'_1(x_1) \\
        -c(x_2)f'_1(x_2)
    \end{pmatrix}.
\end{equation*}

Following \cite{Costa} we can obtain the exact solution $u(t,x)$ of the wave equation \eqref{eq:1d_wave-q} as the first component of
\begin{equation}
    \begin{pmatrix}
        u(t,x) \\
        \Tilde{u}(t,x)
    \end{pmatrix}
    = e^{-i t H_{d/dx} } 
    \begin{pmatrix}
        u_0(x) \\
        0
    \end{pmatrix},
\end{equation}
where $\Tilde{u}(t,x)$ is a component we are not interested in.
We recall that the numerical solution produced by proposed algorithm can be written as 
\begin{equation}
    \begin{pmatrix}
        \vec{\phi}_V(t) \\
        \vec{\phi}_E(t)
    \end{pmatrix}
    = S_{p}^r(t/r) 
    \begin{pmatrix}
        \vec{u}_0 \\
        \vec{0}
    \end{pmatrix},
\end{equation}
where $S_{p}^r(t/r)$ is the Trotterized propagator $e^{-i t H_{k}}$ given by equation \eqref{eq:Trotter_formula-p} with $H_k$ given by \eqref{eq:hamil_wave}, and $\vec{u}_0$ being discretized initial condition, from \eqref{eq:1d_wave-q}.

Taking into account all mentioned above we can write
\begin{equation*}
\begin{split}
    \epsilon_{\text{ns}}(t) &= \norm{\vec{u}_{\text{sw}}(t)-\vec{\phi}_V(t)} \\
    & \leq \norm{ e^{-i t H_{d/dx}} 
    \begin{pmatrix}
       \vec{u}_0 \\
       \vec{0}
    \end{pmatrix} 
    -  S_{p}^r(t/r) 
    \begin{pmatrix}
       \vec{u}_0 \\
       \vec{0}
    \end{pmatrix}
    }.
\end{split}
\end{equation*}

In what follows, consider that the initial state is normalized, that is, $\norm {\vec{u}_0} = 1$, using this, one can get
\begin{equation*}
    \begin{split}
    \epsilon_{\text{ns}}(t) & \leq \norm{ \left( e^{-i t H_{d/dx} } - e^{-i t H_{k}} + e^{-i t H_{k}} - S_{p}^r(t/r) \right)} \\
        & \leq  \norm{ \left( e^{-i t H_{d/dx}} - e^{-i t H_{k}} \right)}  + \norm{ e^{-i t H_{k}} - S_{p}^r(t/r) }  \\
    \end{split}
\end{equation*}
where the matrix $H_{k}$ is the approximation of operator $H_{d/dx}$, given by \eqref{eq:hamil_wave}. We note that $\frac{d}{dx}(c(x)\cdot)$ can be written in matrix form as $B_k I_c$, while $-c(x)\frac{d}{dx}$ can be written in matrix form as $-I_c B_k = (B_k I_c)^T$, where $I_c$ is diagonal matrix which contains values of $c(x)$ on the diagonal (same as in equation \eqref{eq:laplacian_wave}).

Note that the second term is exactly the Trotterization error $\epsilon_{\text{tr}}(r)$ defined in \eqref{eq:errors_trotter}. 

Before considering the first term we introduce $R~=~H_{d/dx}-H_{k}$ and taking into consideration that the derivative is a linear function, it can easily be checked that the operators $H_{d/dx}$ and $H_{k}$ commute, therefore $R H_{k} = H_{k} R$.

Thus, it follows that
\begin{equation*}
    \begin{split}
        \norm{ e^{-i t H_{d/dx}} - e^{-i t H_{k}}} & = \norm{ e^{-i t (R + H_{k})} - e^{-i t H_{k}} } \\
        & = \norm{  e^{-i t R} e^{-i t H_{k}}  - e^{-i t H_{k}} } \\
        & \leq \norm{  e^{-i t R} - I } \norm{e^{-i t H_{k}}} \\
        & \leq \sum_{j=1}  \frac{\abs{t}^j}{j!} \norm{ R^j } \\
        & \leq \sum_{j=1}  \frac{\abs{t}^j}{j!} \norm{ R }^j = e^{t \norm{R}} - 1,
    \end{split}
\end{equation*}
where we considered $t > 0$. 
The only part left to show is $\norm{R} \leq \epsilon_{\text{ds}}(k) \sqrt{ c_{\max}^2 +1}$.

Let us recall that
\begin{equation*}
    \epsilon_{\text{ds}}(k) = \norm{ \frac{d}{dx} \vec{f} - B_k \vec{f}} = \sqrt{ \sum_{j=1}^N   \abs{ f'(x_{j}) + \hat{B}_k f(x_{j}) }^2} 
\end{equation*}
We note now that action of $R$ on some state of size $2N$ may be written as, 
\begin{equation*}
    R
    \begin{pmatrix}
        f_1(x_1) \\
        \vdots \\
        f_1(x_N) \\
        f_2(x_{1}) \\
        \vdots \\
        f_2(x_{N})\\
    \end{pmatrix} =
    \begin{pmatrix}
        \frac{d}{dx}(c(x_1) f_2(x_1)) - \hat{B}_k (c(x_1) f_2(x_1)) \\
        \vdots \\
        \frac{d}{dx}(c(x_N) f_2(x_N)) - \hat{B}_k (c(x_N) f_2(x_N))  \\
        -c(x_{1})f'_1(x_{1}) + c(x_{1}) B_k f_1(x_{1}) \\
        \vdots \\
        -c(x_{N})f'_1(x_{N}) + c(x_{N}) B_k f_1(x_{N})
    \end{pmatrix}
\end{equation*}

Therefore 
\begin{equation}
    \norm{R \vec{f}} = \sqrt{F_2 + F_1},
\end{equation}
where given formula \eqref{eq:approximation-order} we have
\begin{equation}
    \begin{split}
        F_1 &\equiv \sum_{j=1}^N   \abs{ -c(x_{j})f'_1(x_{j}) + c(x_{j}) \hat{B}_k f_1(x_{j}) }^2 \\
        & \leq  c_{\max}^2 \sum_{j=1}^N   \abs{f'_1(x_{j}) - \hat{B}_k f_1(x_{j}) }^2  = c_{\max}^2 \epsilon_{\text{ds}}^2(k),
    \end{split}
\end{equation}
with $c_{\max} = \max_x \abs{c(x)} = \max_x c(x)$, since $c(x) > 0$, and 
\begin{equation}
    \begin{split}
        F_2 &\equiv \sum_{j=1}^N \abs{  \frac{d}{dx}(c(x_j) f_2(x_j)) - \hat{B}_k (c(x_j) f_2(x_j)) }^2 \\
        & = \sum_{j=1}^N \abs{  \frac{d}{dx}g(x_j) - \hat{B}_k g(x_j) }^2 = \epsilon_{\text{ds}}^2(k),
    \end{split}
\end{equation}
where $g(x) \equiv c(x) f_2(x)$. 

Therefore, we have the following result
\begin{equation}
    \norm{R \vec{f}} \leq \sqrt{ c_{\max}^2 \epsilon_{\text{ds}}^2(k) + \epsilon_{\text{ds}}^2(k)} = \epsilon_{\text{ds}}(k) \sqrt{ c_{\max}^2 +1},
\end{equation}
and using it we have the operator norm as 
\begin{equation}
    \norm{R} = \sup_{\norm{\vec{f}} = 1} \norm{R \vec{f}} \leq \epsilon_{\text{ds}}(k) \sqrt{ c_{\max}^2 +1}.
\end{equation}
Finally, we have the following estimation
\begin{equation*}
    \norm{ e^{-i t H_{d/dx}} - e^{-i t H_{k}}} \leq e^{t \norm{R}} - 1 \leq e^{t \epsilon_{\text{ds}}(k) \sqrt{ c_{\max}^2 +1} } - 1,
\end{equation*}
and the the final result
\begin{equation}
\epsilon_{\text{ns}} (t) \leq \epsilon_{\text{tr}}(t, r) + e^{t \epsilon_{\text{ds}}(k) \sqrt{ c_{\max}^2 +1} } - 1 \approx \epsilon_{\text{tr}}(t, r) + t c_{\max} \epsilon_{\text{ds}}(k),
\end{equation}
where we used 
\begin{equation*}
    e^{t \epsilon_{\text{ds}}(k) \sqrt{ c_{\max}^2 +1} } - 1 = O\left(t c_{\max} \epsilon_{\text{ds}}(k) \right).
\end{equation*}

\end{proof}
\section{Dirichlet boundary conditions for higher order schemes} \label{sec:appendix_boundary}

First, we examine $\hat{B}_1$: with this operator, the right-hand side of equation \eqref{eq:1d_wave-q} at $x = 0$ transforms the wave equation into

\begin{equation*}
    \begin{split}
    \frac{d^2 u}{dt^2}(0) &= \frac{1}{4h^2} \left( c^2(-h)u(-2h) - [c^2(-h) + c^2(h)]u(0) \right.\\
    & \left. + c^2(h)u(2h) \right).
    \end{split}
\end{equation*}

By applying conditions $\frac{d^2}{dt^2} u(t, 0) = 0$ and $u(0) = 0$, we obtain the following relation
\begin{equation}
    c^2(-h)u(-2h) + c^2(h)u(2h) = 0.
    \label{eq:condition_symm}
\end{equation}
For the particular case of a constant wave speed $c$, this condition simplifies to $u(-2h) = -u(2h)$. Substituting $u(-2h) = -u(2h)$ back into equation \eqref{eq:condition_symm} results in $c^2(-h) = c^2(h)$.

We now discus modification of matrix $B_k$ near boundaries, using the condition $u(t, -x) = -u(t, x)$. Note, that since $c(-x) = c(x)$ the speed profile may be incorporated afterwards, the same way as it is done in \eqref{eq:laplacian_wave}. We investigate only the upper left corner of this matrix. The bottom right corner mirrors the same properties except for a reflection about the anti-diagonal.

To illustrate the approach it is better to show it on a small example, the generalization is straightforward. Therefore, in what follows, we consider left boundary conditions only, in the example there the derivative approximation is characterized by $k=3$, that is the scheme may be written as $(-b_{3},-b_{2},-b_{1},0,b_{1},b_{2},b_{3})$, see Table \eqref{tab:Approx-M1} for exact coefficients.

Below we show the extension of such operator outside of boundaries depicted by the dashed line.

\begin{figure}[!ht]
\centering
\includegraphics[width=\linewidth]{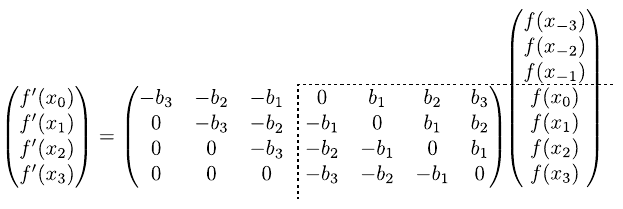}
\end{figure}

In general, we do not know anything about the points $f(x_{-j})$, therefore we will use the assumption derived previously $f(-x) = f(x)$, that is antisymmetry of the function. By employing it, we can write matrix $B_k$ in the following form, where changes to the original matrix are shown with red

\begin{equation*}
    \begin{pmatrix}
    f'(x_{0}) \\
    f'(x_{1}) \\
    f'(x_{2}) \\
    f'(x_{3})
    \end{pmatrix}
=
    \begin{pmatrix}
    0 & b_{1} \textcolor{red}{+ b_{1}} & b_{2} \textcolor{red}{+ b_{2}} & b_{3} \textcolor{red}{+ b_{3}}\\
    -b_{1} & 0 \textcolor{red}{+ b_{2}} & b_{1} \textcolor{red}{+ b_{3}} & b_{2} \\
    -b_{2} & -b_{1} \textcolor{red}{+ b_{3}} & 0 & b_{1} \\
    -b_{3} & -b_{2} & -b_{1} & 0 \\
    \end{pmatrix}
    \begin{pmatrix}
    f(x_{0}) \\
    f(x_{1}) \\
    f(x_{2}) \\
    f(x_{3})
    \end{pmatrix}.
\end{equation*}

Same argument can be made for the second derivative, with one adjustment. Since, the function $f(x)$ is antisymmetric, the derivative of this function should be symmetric, that is $f'(x) = f'(-x)$. We show the result below

\begin{equation*}
    \begin{pmatrix}
    f''(x_{0}) \\
    f''(x_{1}) \\
    f''(x_{2}) \\
    f''(x_{3})
    \end{pmatrix}
=
    \begin{pmatrix}
    0 & b_{1} \textcolor{red}{- b_{1}} & b_{2} \textcolor{red}{- b_{2}} & b_{3} \textcolor{red}{- b_{3}}\\
    -b_{1} & 0 \textcolor{red}{- b_{2}} & b_{1} \textcolor{red}{- b_{3}} & b_{2} \\
    -b_{2} & -b_{1} \textcolor{red}{- b_{3}} & 0 & b_{1} \\
    -b_{3} & -b_{2} & -b_{1} & 0 \\
    \end{pmatrix}
    \begin{pmatrix}
    f'(x_{0}) \\
    f'(x_{1}) \\
    f'(x_{2}) \\
    f'(x_{3})
    \end{pmatrix}.
\end{equation*}

It can be seen that the only difference between these two cases is the sign of modification. It is also worth noting, that we added boundary conditions but lost antisymmetric property if matrix approximation of derivative operator. Combining this together we can write the right hand side of the wave equation \eqref{eq:1d_wave-q} as

\begin{multline*}
    \begin{pmatrix}
    0 & b_{1} \textcolor{red}{- b_{1}} & b_{2} \textcolor{red}{- b_{2}} & b_{3} \textcolor{red}{- b_{3}}\\
    -b_{1} & 0 \textcolor{red}{- b_{2}} & b_{1} \textcolor{red}{- b_{3}} & b_{2} \\
    -b_{2} & -b_{1} \textcolor{red}{- b_{3}} & 0 & b_{1} \\
    -b_{3} & -b_{2} & -b_{1} & 0 \\
    \end{pmatrix} I_q \\ \times 
    \begin{pmatrix}
    0 & b_{1} \textcolor{red}{+ b_{1}} & b_{2} \textcolor{red}{+ b_{2}} & b_{3} \textcolor{red}{+ b_{3}}\\
    -b_{1} & 0 \textcolor{red}{+ b_{2}} & b_{1} \textcolor{red}{+ b_{3}} & b_{2} \\
    -b_{2} & -b_{1} \textcolor{red}{+ b_{3}} & 0 & b_{1} \\
    -b_{3} & -b_{2} & -b_{1} & 0 \\
    \end{pmatrix}
    \begin{pmatrix}
    f(x_{0}) \\
    f(x_{1}) \\
    f(x_{2}) \\
    f(x_{3})
    \end{pmatrix}.
\end{multline*}

As was mentioned before, the key idea we employ in this work is the representation of the right hand side of the wave equation in the form $-BB^T$, with some matrix $B$. To realize it, we note that since $f(x_0) = 0$ we can change the first column of the first derivative approximation arbitrary. We also note that $f'(x_0) \neq 0$, so we can not use the same argument for the second derivative. Having this we change the first column of the first derivative approximation with zeros. Moreover we introduce coefficients $\alpha=1$, $\beta=2$ and rewrite the equation above as

\begin{multline*}
    \begin{pmatrix}
    0 & \textcolor{red}{0} & \textcolor{red}{0} & \textcolor{red}{0}\\
    -\textcolor{red}{\alpha}b_{1} & 0 \textcolor{red}{- b_{2}} & b_{1} \textcolor{red}{- b_{3}} & b_{2} \\
    -\textcolor{red}{\alpha}b_{2} & -b_{1} \textcolor{red}{- b_{3}} & 0 & b_{1} \\
    -\textcolor{red}{\alpha}b_{3} & -b_{2} & -b_{1} & 0 \\
    \end{pmatrix} I_q \\ \times 
    \begin{pmatrix}
    0 & \textcolor{red}{\beta}b_{1} & \textcolor{red}{\beta}b_{2} & \textcolor{red}{\beta}b_{3} \\
    \textcolor{red}{0} & 0 \textcolor{red}{+ b_{2}} & b_{1} \textcolor{red}{+ b_{3}} & b_{2} \\
    \textcolor{red}{0} & -b_{1} \textcolor{red}{+ b_{3}} & 0 & b_{1} \\
    \textcolor{red}{0} & -b_{2} & -b_{1} & 0 \\
    \end{pmatrix}
    \begin{pmatrix}
    f(x_{0}) \\
    f(x_{1}) \\
    f(x_{2}) \\
    f(x_{3})
    \end{pmatrix}.
\end{multline*}

Matrix $I_q$ is diagonal and the coefficients $\alpha$ and $\beta$ are present in the resulted vector only as product, that is $\alpha \beta = 2$. Therefore, we can make them equal $\alpha = \beta = \sqrt{2}$. This way we indeed can see that the right hand side of wave equation with incorporated boundary conditions can be written as $-B_k(c)B_k(c)^T$.
\newpage
\onecolumngrid

\section{program listings} \label{sec:appendix_listings}

\subsection{Listing to compute formulae in proposition  \ref{theor:sets} and \ref{theor:number_sets} for the decomposition of a general tridiagonal matrix}
\label{list:unique_solutions}

\begin{lstlisting}
#include <limits.h> // For CHAR_BIT and __builtin_clz

/**
 * @brief Generates subsets based on given parameters and fills the output array.
 * 
 * This function calculates subsets of a given number of qubits with a specified
 * matrix bandwidth. It optionally handles a special BBbar case. The results are
 * stored in the provided output array.
 * 
 * @param numQubits       The number of qubits.
 * @param matrixBandwidth The matrix bandwidth.
 * @param isBBbarCase     Boolean flag indicating if the BBbar case should be handled.
 * @param x               Output array to store the generated subsets.
 * @return long           The number of subsets generated.
 */
long generate_x(int numQubits,          // in: number of qubits
                int matrixBandwidth,    // in: matrix bandwidth
                int isBBbarCase,        // in: 1 if BBbar case, 0 if general matrix
                unsigned long x[]       // out: fill up this structure
                )                       // returns number of subsets
{
    // Validate input parameters
    if (numQubits <= 0 || matrixBandwidth <= 0) {
        return 0; // Return 0 if the input parameters are invalid
    }

    // Determine the bbit value based on the BBbar case
    unsigned long bbit = isBBbarCase ? (1UL << numQubits) : 0;
    long count = 0; // Initialize count of subsets

    // Loop through each value of k from 1 to the matrix bandwidth
    for (int k = 1; k <= matrixBandwidth; k++) {
        int s = 0;
        // Calculate s as the smallest integer such that 2^s >= k
        while ((1 << s) < k)   s++;    // s = log_2(k) rounded up

        // Loop through each value of j from 1 to (numQubits - s)
        for (int j = 1; j <= numQubits - s; j++) {
            // Calculate the subset and store it in the output array
            x[count++] = bbit | ((1UL << j) - 1) << s | ((1UL << s) - k);
        }
    }

    // Calculate and return the expected number of subsets according to the formula
    int binLd = (sizeof(matrixBandwidth) * CHAR_BIT) - __builtin_clz(matrixBandwidth);
    return (1UL << binLd) + (numQubits - binLd) * matrixBandwidth;
}

\end{lstlisting}

\subsection{Listing to compute Walsh function, Equation \eqref{eq:walsh_oper} from pair $(\mathbf{x}_{k,j},\mathbf{z})$}
\label{list:walsh_oper} 

\begin{lstlisting}
#include <stdio.h>
#include <stdint.h>

// Constants representing Pauli matrices 
// and their corresponding imaginary sign and value tables
const char PAULI_NAMES  [4] = {'I', 'Z', 'X', 'Y'};
const char IMAG_SIGN_TAB[4] = {'+', '+', '-', '-'};
const char IMAG_VAL_TAB [4] = {'1', 'i', '1', 'i'};

/**
 * @brief Calculates the phase (i)^(x.z) and generates a Pauli string index.
 * 
 * This function computes the phase based on the bitwise AND of `x` and `z`, 
 * and fills the `indx` array with indices representing Pauli matrices.
 * 
 * @param n     The length of the Pauli string.
 * @param x     The set label.
 * @param z     The set member.
 * @param indx  The output array to store the Pauli string indices.
 * @return int  The phase (i)^(x.z)
 * first bit in (x,z) is the LSB (least significant bit)
 */
int W(int n,                      // number of qubits 
      unsigned long x,            // set label
      unsigned long z,            // set member
      char indx[]                 // Pauli string length n
     )                    
{
    // Calculate the imaginary component index based on the popcount of x & z
    int imag = __builtin_popcountll(x & z) & 0x3;   // index into the IMAG_SIGN_TAB 
                                                    // and IMAG_VAL_TAB

    // Build the Pauli string index
    indx[0] = PAULI_NAMES[((x << 1) & 0x2) | (z & 0x1)];  // LSB gives indx[0]
    indx[1] = PAULI_NAMES[(x & 0x2) | ((z >> 1) & 0x1)];  // Next bit gives indx[1]

    // Loop through the remaining bits to build the Pauli string index
    for (int i = 2; i < n; i++) {
        indx[i] = PAULI_NAMES[((x >> (i - 1)) & 0x2) | ((z >> i) & 0x1)];
    }

    // Return the phase (i)^(x.z)
    return imag;
}

// Example usage
int main() {
    int n = 4;
    unsigned long x = 0b1010;
    unsigned long z = 0b1100;
    char indx[4];

    int phase = W(n, x, z, indx);

    printf("Phase: %d\n", phase);
    printf("Pauli String: ");
    for (int i = 0; i < n; i++) {
        printf("%c", indx[i]);
    }
    printf("\n");

    return 0;
}
\end{lstlisting}

\subsection{Listing to compute decomposition weights, Equation \eqref{eq:calc_weight} from pair $(\mathbf{x}_{k,j},\mathbf{z})$}
\label{list:calc_weight} 

\begin{lstlisting}
#include <cmath>
#include <cstddef>

// Function prototypes for B(), assuming it is defined elsewhere
double B(long k, long m);

// Constants for bitwise operations
const int BIT_REFLECTION = 0x1;
const int BIT_NEGATIVE_I = 0x2;
const int BIT_MASK = 0x3;

/**
 * Function to calculate the decomposition weight for a real symmetrical matrix.
 *
 * Parameters:
 * - num_qubits (int): The number of qubits.
 * - num_diagonals (long): The number of diagonals to iterate over.
 * - x_label (long): The set label.
 * - z_member (long): The set member.
 *
 * Returns:
 * - double: The calculated decomposition weight.
 *
 * Algorithm:
 * 1. Calculate the matrix size as 2^num_qubits.
 * 2. Determine the imaginary component based on the bitwise AND of x_label and
 *    z_member.
 * 3. Calculate the signs for reflection around the main diagonal and the
 *    presence of -1 or -i.
 * 4. Iterate over all diagonals up to num_diagonals and elements of each
 *    diagonal.
 * 5. Calculate the local set label and skip if it does not match the set label.
 * 6. Determine the sign for the set member and calculate the total sign.
 * 7. Retrieve the values from the band matrices and update the weight.
 * 8. Normalize the weight by the matrix size and return it.
 */
double w_term(int num_qubits, long num_diagonals, long x_label, long z_member) 
{
    size_t matrix_size = 1LL << num_qubits; // Calculate the matrix size as 2^num_qubits

    // Calculate the imaginary component based on bitwise AND of x_label and z_member
    long imag_component = __builtin_popcount(x_label & z_member) & BIT_MASK;

    // Determine the sign for reflection around the main diagonal
    long sign_reflection = (imag_component & BIT_REFLECTION) ? -1 : 1;

    // Determine the sign for the presence of -1 or -i
    long sign_negative_i = (imag_component & BIT_NEGATIVE_I) ? -1 : 1;

    double weight = 0.0;

    // Iterate over all diagonals up to num_diagonals
    for (long diagonal = 0; diagonal <= num_diagonals; diagonal++) {
        // Iterate over all elements of the current diagonal
        for (long element = 0; element < matrix_size - diagonal; element++) {
            // Calculate the local set label
            long local_label = (diagonal + element) ^ element;

            // Skip if the local label does not match the set label
            if (x_label != local_label) continue;

            // Determine the sign for the set member
            long sign_set_member = (__builtin_popcount(element & z_member) & 1) ? -1 : 1;

            // Calculate the total sign
            long total_sign = sign_negative_i * sign_set_member;

            // Retrieve the values from the band matrices
            double value1 = B(diagonal, element);
            double value2 = B(-diagonal, element);

            // Update the weight
            weight += total_sign * (value1 + sign_reflection * value2);
        }
    }

    // Normalize the weight by the matrix size
    return weight / matrix_size;
}\end{lstlisting}

\end{document}